\newtheorem{lemma}{Lemma}
\newtheorem{theorem}{Theorem}
\newcommand{\bs}[1]{\boldsymbol{#1}}
\def\bhat{\hat{\beta}}
\def\shat{\hat{s}}
\def\vbhat{\bs{\bhat}}
\def\vshat{\bs{\shat}}
\def\vbeta{\bs{\beta}}
\def\ash{{ASH}}
\DeclareMathOperator*{\diag}{diag}
\DeclareMathOperator*{\var}{var}
\DeclareMathOperator*{\argmax}{arg\ max}
\DeclareMathOperator*{\argmin}{arg\ min}
\DeclareMathOperator*{\tr}{tr}
\begin{document}

\singlespacing
\title{Empirical Bayes Shrinkage and False
  Discovery Rate Estimation, Allowing For Unwanted Variation}

\author{David Gerard$^{1}$ and Matthew Stephens$^{1,2}$\\
\small Departments of Human Genetics$^1$ and Statistics$^2$, \\
\small  University of Chicago, Chicago, IL, USA}

\date{}
\maketitle

\begin{abstract}
  We combine two important ideas in the analysis of large-scale
  genomics experiments (e.g.~experiments that aim to identify genes
  that are differentially expressed between two conditions). The first
  is use of Empirical Bayes (EB) methods to handle the large number of
  potentially-sparse effects, and estimate false discovery rates and
  related quantities. The second is use of factor analysis methods to
  deal with sources of unwanted variation such as batch effects and
  unmeasured confounders.  We describe a simple modular fitting
  procedure that combines key ideas from both these lines of
  research. This yields new, powerful EB methods for analyzing
  genomics experiments that account for both sparse effects and
  unwanted variation.  In realistic simulations, these new methods
  provide significant gains in power and calibration over competing
  methods. In real data analysis we find that different methods, while
  often conceptually similar, can vary widely in their assessments of
  statistical significance. This highlights the need for care in both
  choice of methods and interpretation of results. All methods
  introduced in this paper are implemented in the R package
  \texttt{vicar} available at \url{https://github.com/dcgerard/vicar}.
\end{abstract}

\let\thefootnote\relax\footnote{\emph{Keywords and phrases}:~batch
  effects; empirical Bayes; RNA-seq; surrogate variable analysis;
  unobserved confounding; unwanted variation.}
\addtocounter{footnote}{-1}\let\thefootnote\svthefootnote

\section{Introduction}
\label{section:introduction}

Many modern genomics experiments involve scanning the genome, or a
list of genomic units (e.g.~``genes''), to detect differences between
groups of samples.  For example, a simple ``differential expression''
experiment might measure the expression (activity level) of many genes
in samples from two groups, and aim to identify at which genes these
groups differ in their mean expression. The motivation is that identifying
such genes may yield insights into the biological basis of
differences between the groups.

Analyses of such experiments involve many issues, but two are
particularly important and arise repeatedly. The first is that effects
are often sparse --- for example, in a differential expression
experiment, many genes may show little difference in expression
between two groups.  The second is that genomic experiments are often
plagued by ``unwanted variation'' such as batch effects and unmeasured
confounders
\citep{leek2007capturing,leek2008general,gagnon2012using}. It is
crucial to address both these issues during statistical analyses. The
sparsity of effects requires careful handling of statistical
significance thresholds to avoid large numbers of false
discoveries. And unwanted variation, if unaccounted for, can obscure
or confound signals of interest, and can create the appearance of
signals where they do not exist.

Here we combine two ideas that have been used to address these
issues. The first is the use of Empirical Bayes (EB) methods to assess
the sparsity of effects, and estimate false discovery rates (FDRs) and
related quantities
\citep[e.g][]{efron2004large,efron2008microarrays,stephens2016false}.
The second is the use of factor analysis (FA) to deal with sources of
unwanted variation such as batch effects and unmeasured confounders
\citep[e.g.][]{lucas2006sparse, leek2007capturing, leek2008general,
  gagnon2012using, sun2012multiple, gerard2017unifying,
  wang2017confounder}. By combining ideas from both these lines of
research we provide powerful new analysis methods that simultaneously
account for both sparse effects and unwanted variation.

Our work is not the first to combine sparsity of effects with FA
for unwanted variation. Indeed ``Fully Bayesian'' approaches that do
this were among the first work on dealing with unwanted variation
\citep[e.g.][]{lucas2006sparse,carvalho2008high}. However, these
methods are complex, computationally challenging, and have not been
widely adopted, perhaps in part because of lack of easy-to-use
software implementations. In comparison our EB methods are relatively
simple, and we provide implementations in an R package.  Also, our EB
methods exploit recently-introduced semi-parametric prior
distributions \citep{stephens2016false} which assume that the
distribution of effects is unimodal at zero.  These priors are both
computationally convenient, and more flexible than those used in
previous Bayesian work.

More recently, \citet{sun2012multiple} and \citet{wang2017confounder}
introduced (non-Bayesian) approaches that combine sparsity of effects
and FA for unwanted variation.  Indeed \citet{wang2017confounder} give
theory that supports combining these two ideas:~the
estimation of the effects and the FA are intimately entwined, and
assuming sparsity of effects helps identify the unwanted variation.
To implement this idea \citet{wang2017confounder} --- building
directly on \citet{sun2012multiple} --- jointly estimate the effects
and the unwanted variation, using a penalty to induce sparsity on the
effects. Our work here takes a similar approach, but replaces the
penalty approach with EB methods to induce sparsity.  The EB approach
has several advantages over a penalized approach:~it provides not only
sparse point estimates, but also shrunken interval estimates, and
estimates of FDRs and related quantities. And the semi-parametric
prior distributions we use are considerably more flexible than the
penalty approach (which often has only a single parameter to control
sparsity and shrinkage).

Our methods based on assuming sparse (or, more precisely, unimodal)
effects provide an attractive alternative to methods based on
``control genes'' \citep{gagnon2012using}, which are genes assumed
{\it a priori} to have no effect.  Like the sparsity assumption, the
control gene assumption helps identify the unwanted variation
\citep{gagnon2012using, wang2017confounder}.  However, while the
sparsity assumption is almost universally adopted in genomic analyses
(implicitly or explicitly), the control gene assumption brings a
considerable additional burden:~specifying a suitable set of controls
is non-trivial and potentially error-prone.  Furthermore, even when
the controls are perfectly chosen, our methods can produce better
results, particularly if the number of control genes is small (see
Section \ref{section:empirical.evaluations}).  (It would be
straightforward to incorporate control genes --- as well as sparsity
--- into our method, but we do not pursue this here.)

One key feature of our method (also shared by several methods
mentioned above) is its ``modularity''.  In particular we exploit a
modular fitting procedure \citep[e.g.][]{wang2017confounder} that {\it
  jointly} estimates the effects and FA, while also {\it separating
  out the FA} from the rest of the method. Consequently, no particular
approach to FA is ``baked in'' to our method; instead it can easily
accommodate any approach to FA, including for example Bayesian
approaches to FA \citep[e.g.][]{hoff2007model, stegle2008accounting,
  engelhardt2010analysis, stegle2010bayesian}. Similarly, the method
can accommodate a range of possible pre-processing steps that are
often necessary in genomic data analysis.  The modular approach also
simplifies computation, and eases both implementation and
interpretation. Indeed our methods maintain much of the simple modular
structure and logic of the simplest existing approaches to this
problem.  The benefits of modularity, while widely recognized in
software design, are rarely explicitly acknowledged in statistical
methods development, and we believe they merit greater recognition.

On notation:~we generally denote matrices by uppercase boldface
($\bs{A}$), vectors by lowercase boldface ($\bs{a}$), scalars by
lowercase non-boldface ($a$), and sets with calligraphic letters
($\mathcal{A}$). There are exceptions when the context is clear. For
example $\bs{\beta}$ is sometimes a matrix and sometimes a
vector. Elements of a vector or matrix are denoted by their lowercase
non-boldface versions. For example $a_i$ is the $i$th element of
$\bs{a}$ and $a_{ij}$ is the $(i,j)$th element of $\bs{A}$. We let
$\bs{A}_{n \times p}$ denote that the matrix $\bs{A}$ has dimension
$n \times p$, i.e.\ $\bs{A} \in \mathbb{R}^{n \times p}$.

\section{Background} \label{section:background}

\subsection{Typical analysis pipeline} \label{section:genomics}

Genomics researchers often aim to identify which genomic features are
associated with one or more biological factors of interest.  For
example, which genes have activity levels that differ, on average,
between males and females?  To assess this they would measure gene
expression at many genes on samples of each sex, and then perform
statistical analyses to identify which genes show significant
differences in mean levels between the two groups.

There are many ways to perform such statistical analyses
\citep[e.g.][]{soneson2013comparison}, but in outline a typical
analysis might involve:
\begin{enumerate}[noitemsep, nolistsep]
\item For each gene, $j$, estimate an effect size $\bhat_j$ and
  corresponding standard error $\shat_j$. (In our example $\bhat_j$
  would represent the estimated difference in mean gene expression
  between the two sexes.) For example, this might be achieved by
  applying a linear model to appropriately normalized and/or
  transformed expression data \citep[e.g.][]{law2014voom}, combined
  with methods to moderate (stabilize) variance estimates
  \citep[e.g.][]{smyth2004linear}.
\item For each gene, $j$, use $\bhat_j, \shat_j$ to obtain a
  $p$-value, $p_j$, testing the null hypothesis that gene $j$ shows no
  effect. Then apply FDR methods
  \citep{benjamini1995controlling,storey2003positive} to the set of
  all $p$ values to decide which genes are ``significant''.
\end{enumerate}

\subsection{Adaptive shrinkage} \label{section:ash}

Building on ideas of \citet{efron2004large,efron2008microarrays},
\citet{stephens2016false} suggests an alternative to Step 2 above,
which he calls ``adaptive shrinkage'' or \ash{}.  Specifically,
\citet{stephens2016false} suggests combining the ``observations''
$(\vbhat,\vshat)$ from Step 1 with a flexible but powerful
assumption:\ that the true effects ($\vbeta$) come from a unimodal
distribution with mode at 0. This assumption captures the expectation
that many effects will be at or near 0, and is effectively an analogue
of (or alternative to) the ``sparsity assumption'' often made in this
context. \citet{stephens2016false} provides methods to estimate this
unimodal distribution, and to compute posterior distributions and
measures of significance for each effect --- the local FDR (lfdr;
\citet{efron2008microarrays}), and local false sign rate (lfsr;
\citet{stephens2016false}) --- analogous to the standard pipeline
above.  \citet{stephens2016false} highlights several advantages of this
approach:\ it better accounts for differences in measurement precision
($\shat_j$) among genes; it can provide better (less conservative)
estimates of the FDR, provided the unimodal assumption holds; and it
provides calibrated interval estimates for each effect, which are
otherwise difficult to obtain.

In more detail:\ \ash{} uses the normal means model
\citep{stein1981estimation} to relate the observations
$(\vbhat, \vshat)$ to the effects $\vbeta$:
\begin{align}
  \label{eq:normal.lik.model}
\vbhat \, | \, \bs{\beta}, \vshat \sim N_p(\bs{\beta},\bs{S}),
\end{align}
where $N_p$ denotes the $p$-dimensional multivariate normal
distribution and $\bs{S} := \diag(\shat_1^2,\ldots,\shat_p^2)$.  Thus
the likelihood for $\vbeta$ is
\begin{align} \label{eq:normal.lik}
L(\vbeta; \vbhat, \vshat) = \prod_{j = 1}^p N(\bhat_j|\beta_j, \shat_j^2),
\end{align}
where $N(\cdot|a, b^2)$ denotes the normal density function with mean
$a$ and variance $b^2$.  This likelihood is then combined with the
unimodal assumption:
\begin{align}
\label{eq:beta.prior.spec}
\beta_1,\ldots,\beta_p \overset{iid}{\sim} g \in \mathcal{U},
\end{align}
where $\mathcal{U}$ denotes the space of unimodal distributions with
mode at 0.

\citet{stephens2016false} provides methods to fit the model
\eqref{eq:normal.lik}-\eqref{eq:beta.prior.spec} using a two-step EB
approach:
\begin{enumerate}[noitemsep, nolistsep]
\item Estimate $g$ by maximizing the marginal likelihood:
  \begin{align}
    \label{eq:marginal.density}
    \hat{g} = \argmax_{g\in\mathcal{U}}p(\bhat|g, \vshat) = \argmax_{g\in\mathcal{U}}\prod_{j = 1}^p\int_{\beta_j}N( \bhat_j|\beta_j, \shat_j^2) g(d\beta_j).
  \end{align}
\item Compute posterior distributions
  $p(\beta_j | \hat{g}, \vbhat, \vshat)$, and return posterior
  summaries, including the lfdr and lfsr.
\end{enumerate}

In practice, \ash{} approximates the optimization
\eqref{eq:marginal.density} by exploiting the fact that any unimodal
distribution can be approximated arbitrarily well using a finite
mixture of uniform distributions.  Using this representation,
\eqref{eq:marginal.density} becomes a convex optimization problem over
a finite (but large) set of mixture weights
$\bs{\pi} = (\pi_1,\ldots,\pi_M)$ (see equation
\eqref{eq:prior.mixture} later).  This can be solved efficiently using
interior point methods \citep{boyd2004convex, koenker2014convex}.

\subsection{Removing Unwanted Variation}
\label{subsection:rotate}

Unwanted variation can plague genomics experiments that aim to
identify systematic differences in gene expression, or other genomics
features, among groups of samples \citep[Appendix
\ref{section:simple.illustration} of the Supplementary
Materials]{leek2008general, leek2007capturing, stegle2010bayesian,
  leek2010tackling, gagnon2012using, sun2012multiple}.  Unwanted
variation may include measured variables such as batch, or sample
covariates such as age or sex, but also --- and most challengingly ---
unmeasured variables, such as aspects of sample preparation and
handling that may be difficult to measure and control.  Unwanted
variation, if unaccounted for, can obscure or confound signals of
interest, and can create the appearance of signals where they do not
exist.

As the severity of the problems caused by unwanted variation has been
increasingly recognized, many statistical methods have been developed
to help ameliorate them \citep{lucas2006sparse, leek2007capturing,
  sun2012multiple, gagnon2013removing, gerard2017unifying,
  wang2017confounder}.  Most of these methods are based on a
``factor-augmented regression model''
\citep{leek2007capturing,leek2008general}:
\begin{align}
  \label{eq:full.model}
  \bs{Y}_{n\times p} = \bs{X}_{n \times
    k}\bs{\beta}_{k \times p} + \bs{Z}_{n \times
    q}\bs{\alpha}_{q \times p} + \bs{E}_{n\times
    p},
\end{align}
where $y_{ij}$ is the normalized expression level of gene $j$ in
sample $i$; $\bs{X}$ is a matrix containing observed covariates, with
$\bs{\beta}$ a matrix of corresponding effects; $\bs{Z}$ is a matrix
of unobserved factors causing unwanted variation, with $\bs{\alpha}$ a
matrix of corresponding effects; and $\bs{E}$ has independent
(Gaussian) errors with means 0 and column-specific variances
$\var(e_{ij}) = \sigma_j^2$. In \eqref{eq:full.model} only $\bs{Y}$
and $\bs{X}$ are known; other quantities are to be estimated.

Here we focus on the common setting where only one of the covariates
in the columns of $\bs{X}$ is of interest, and the other $k - 1$
covariates are included to improve the model (e.g.~to control for
measured confounders, or as an intercept term). To further simplify
notation we focus on the case $k=1$, so $\bs{X}$ is an $n$-vector, and
$\vbeta$ is a $p$-vector of the effects of interest. However, our
methods and software implementation allow $k>1$. See Appendix
\ref{section:detailed.review} of the Supplementary Materials for
details. See also Appendix \ref{section:linear.combo} of the
Supplementary Materials where we further discuss how to apply these
methods when a single linear combination of the effects are of
interest.

There are many approaches to fitting \eqref{eq:full.model}. Here we
exploit a modular approach used by several previous methods, including
RUV4 \citep{gagnon2013removing}, LEAPP \citep{sun2012multiple}, and
CATE \citep{wang2017confounder}. In outline this involves:
\begin{enumerate}[noitemsep, nolistsep]
\item For each gene $j$ ($j=1,\dots,p$) obtain an initial estimate
  $\bhat_j$ for $\beta_j$ ignoring unwanted variation by using
  ordinary least squares (OLS) regression of the $j$th column of
  $\bs{Y}$ on $\bs{X}$.
\item Form the matrix of residuals from these regressions,
  $\tilde{\bs{Y}}:= \bs{Y} - \bs{X}\vbhat$, and perform a FA on these
  residuals. (Some methods, including CATE and the methods we present
  here, perform this step in practice by applying FA to a slightly
  different matrix. However, the end result is similar or identical,
  and we find it simpler and more intuitive to describe the methods in
  terms of the residual matrix. See Appendix
  \ref{section:detailed.review} of the Supplementary Materials for
  details.) Performing an FA on $\tilde{\bs{Y}}$ means fitting a model
  of the form:
\begin{equation} \label{eqn:FA}
\tilde{\bs{Y}} = \tilde{\bs{Z}} \tilde{\bs{\alpha}}  + \tilde{\bs{E}}.
\end{equation}
Most methods are flexible about exactly how FA is performed here, at
least in principal if not in software.  The resulting estimate
$\hat{\bs{\alpha}}$ of $\tilde{\bs{\alpha}}$ in \eqref{eqn:FA} can be
viewed as an estimate of $\bs{\alpha}$ in \eqref{eq:full.model}.  This
step also yields estimates $\hat{\sigma}^2_j$ of the residual
variances $\sigma^2_j$ in \eqref{eq:full.model}.
\item Estimate $\bs{\beta}$ by jointly estimating
  $(\bs{\beta},\bs{z})$ in the following ``simplified model'':
\begin{align}
  \label{eq:simplified.model}
  \vbhat \sim N_{p}(\vbeta + \hat{\bs{\alpha}}^{\intercal}\bs{z}, \bs{S}),
\end{align}
where $\bs{z} \in \mathbb{R}^{q}$, $\vbhat \in \mathbb{R}^{p}$ are the
OLS estimates from Step 1, and
$\bs{S} = \diag(\shat_1^2,\ldots,\shat_p^2)$ where $\shat_j$ is an
estimated standard error of $\bhat_j$,
\begin{align}
\label{eq:shat.def}
\shat^2_j = \hat{\sigma}^2_j/(\bf{X}^T \bf{X}).
\end{align}
Model \eqref{eq:simplified.model} has a simple interpretation:~the OLS
estimates $\vbhat$ are equal to the true coefficients ($\vbeta$) plus
a bias term due to unwanted variation
($\hat{\bs{\alpha}}^{\intercal}\bs{z}$) plus some noise
($N_p(\bs{0}, \bs{S})$). That is $\bs{z}$ can be interpreted as
capturing the effect of the unwanted variation on the OLS estimates.
\end{enumerate}
This modular approach to fitting the model \eqref{eq:full.model} is
less {\it ad hoc} than it may first seem, and can be rigorously
justified (\citet{wang2017confounder}; see Appendix
\ref{section:detailed.review} of the Supplementary Materials for a
detailed review).

A key way in which methods differ is the assumptions they make when
fitting model \eqref{eq:simplified.model}.  This model contains
$p + q$ parameters but only $p$ observations, so additional
assumptions are clearly necessary \citep{wang2017confounder}.

One type of method assumes that some genes are ``control genes''
\citep{gagnon2013removing}.  That is, to assume that for some set
$\mathcal{C} \subseteq \{1,\ldots, p\}$, the effects $\beta_{j} = 0$
for all $j \in \mathcal{C}$. For these control genes
\eqref{eq:simplified.model} becomes:
\begin{align}
\label{eq:reduced.model.control}
\hat{\vbeta}_{\mathcal{C}} \sim N_p(\hat{\bs{\alpha}}_{\mathcal{C}}^{\intercal}\bs{z}, \bs{S}_{\mathcal{C}}),
\end{align}
where $\hat{\vbeta}_{\mathcal{C}}$ denotes the elements of
$\hat{\vbeta}$ that correspond to indices in $\mathcal{C}$. Fitting
this model yields an estimate for $\bs{z}$, $\hat{\bs{z}}$, say.
Substituting this estimate into \eqref{eq:simplified.model} then
yields an estimate for $\vbeta$,
\begin{equation} \label{eq:bhat.adj}
\vbhat' = \vbhat - \hat{\bs{\alpha}}^{\intercal}\hat{\bs{z}}.
\end{equation}
This approach is used by both RUV4 and the negative controls version
of CATE (CATEnc), with the difference being that RUV4 uses OLS when
estimating $\bs{z}$ whereas CATEnc uses generalized least squares
(GLS).

An alternative approach, used by LEAPP \citep{sun2012multiple} and the
robust regression version of CATE (CATErr) \citep{wang2017confounder},
is to assume the effects $\vbeta$ are sparse.  Both LEAPP and CATErr
do this by introducing a penalty on $\vbeta$ when fitting
\eqref{eq:simplified.model}.  LEAPP returns the estimates of $\vbeta$
from this step (so these estimates are sparse and/or shrunk due to the
sparsity-inducing penalty). CATErr, instead only keeps the estimates
of $\bs{z}$ and estimates $\vbeta$ by \eqref{eq:bhat.adj}.  Our
methods here essentially involve replacing the sparsity-inducing
penalty with the unimodal assumption from \ash{}.

\section{MOUTHWASH}
\label{section:mouthwash}

Here we combine the EB method from \ash{} with the modular fitting
procedure for removing unwanted variation outlined above.  This yields
an analysis pipeline that combines the benefits of \ash{} (see above)
while also removing unwanted variation.  In brief, our new method
involves replacing the likelihood \eqref{eq:normal.lik.model} in
\ash{} with the likelihood \eqref{eq:simplified.model}, which accounts
for unwanted variation.  We then modify the EB approach of \ash{} to
optimize over both the unimodal prior distribution $g$ and the
unwanted variation $\bs{z}$.  We call this method MOUTHWASH
(\textbf{M}aximizing \textbf{O}ver \textbf{U}nobservables \textbf{T}o
\textbf{H}elp \textbf{W}ith \textbf{A}daptive \textbf{SH}rinkage).

In more detail, MOUTHWASH involves:
\begin{enumerate}[noitemsep, nolistsep]
\item Estimate effects $\bhat_j$ by OLS regression of the $j$th column
  of $\bs{Y}$ on $\bs{X}$.
\item Obtain $\hat{\bs{\alpha}}$ and $\hat{\sigma}_j$ by
  applying a FA to the residual matrix $\tilde{\bs{Y}}$. (These first
  two steps are the same as RUV4, LEAPP and CATE, as outlined above.)
\begin{description}
\item 2b. Optionally, apply variance moderation
  \citep{smyth2004linear} to adjust the $\hat{\sigma}_j$'s \citep[as
  in][]{gagnon2013removing}. We do this using $n - k - q$ as the
  degrees of freedom.
\end{description}
\item Estimate the unimodal effects distribution $g$ and the unwanted
  variation effects $\bs{z}$ by maximum (marginal) likelihood applied
  to \eqref{eq:simplified.model}:
  \begin{align}
    \label{eq:mouth.opt}
    \begin{split}
      (\hat{g},\hat{\bs{z}}) &:= \argmax_{(g,\bs{z})\ \in\ \mathcal{U}\times \mathbb{R}^k}p(\vbhat |g,\bs{z},\hat{\bs{\alpha}},\vshat)\\
      &= \argmax_{(g,\bs{z})\ \in\ \mathcal{U}\times \mathbb{R}^k}\prod_{j = 1}^p \int_{\beta_j}N(\bhat_j|\beta_j + \hat{\bs{\alpha}}_j^{\intercal}\bs{z}, \shat_j^2)g(d\beta_j),
    \end{split}
  \end{align}
  where $\hat{s}_j$ is defined in \eqref{eq:shat.def}.
\item Compute posterior distributions
  $p(\beta_j | \hat{g}, \hat{\bs{z}}, \vbhat, \vshat)$, and return
  posterior summaries.
\end{enumerate}

The key new step is Step 3. As in \cite{stephens2016false} we approximate this
optimization by optimizing $g$ over a set of finite mixture
distributions indexed by mixing proportions $\bs{\pi}$:
\begin{align}
\label{eq:prior.mixture}
g(\beta_j|\bs{\pi}) &= \pi_0\delta_0(\beta_j) + \sum_{m = 1}^M\pi_mf_m(\beta_j),
\end{align}
where the $f_k$ are pre-specified component pdf's with one of the following forms:
\begin{enumerate}[noitemsep, nolistsep]
\item[i)] $f_m(\cdot) = N(\cdot|0,\tau_m^2)$,
\item[ii)] $f_m(\cdot) = U[\cdot|-a_m,a_m]$,
\item[iii)] $f_m(\cdot) = U[\cdot|-a_m,0]$ or $U[\cdot|0,a_m]$,
\end{enumerate}
where $U[\cdot|a, b]$ denotes the uniform density with lower limit $b$
and upper limit $a$. These three different options correspond
respectively to (approximately) optimizing $g$ over i) all
(zero-centered) scale mixtures of normals; ii) symmetric unimodal
distributions with mode at 0; iii) all unimodal distributions with
mode at 0.

With this mixture representation the integral in \eqref{eq:mouth.opt}
can be computed analytically, and optimization can be performed using
either an EM algorithm (Appendix \ref{section:em.normal} of the
Supplementary Materials) or a coordinate ascent algorithm (Appendix
\ref{section:em.uniform} of the Supplementary Materials). Although
this optimization problem is --- in contrast to \ash{} --- no longer
convex, we have found that with appropriate initialization of
$\bs{\pi}$ (initializing $\pi_0$ close to 1) these algorithms produce
consistently reliable results (Supplementary Figure
\ref{figure:pi0.5}). Thus, for each simulated and real dataset we run
MOUTHWASH once from this initialization.

\subsection*{Identifiability}

In \eqref{eq:full.model}, as in any factor model, identifiability
issues arise.  Specifically, the following likelihoods are equivalent:
\begin{align}
\label{eq:not.ident}
p(\bs{Y} | \bs{\beta}, \bs{Z}\bs{A}, \bs{A}^{-1}\bs{\alpha}, \bs{\Sigma}) =
p(\bs{Y} | \bs{\beta}, \bs{Z}, \bs{\alpha}, \bs{\Sigma}),
\end{align}
for any non-singular $\bs{A} \in \mathbb{R}^{q \times q}$.  The result
of this non-identifiability is that (in the absence of prior
information on $\bs{\alpha}$) the estimate of $\bs{\alpha}$ from Step
2 above can be considered identified only up to its rowspace.  It
therefore seems desirable that the estimates obtained in Steps 3 and 4
of MOUTHWASH should depend on $\hat{\bs{\alpha}}$ \emph{only} through
its rowspace. \citet{gagnon2013removing} proved that their estimator
satisfied this property. We prove in Theorem \ref{theorem:row.space}
(Appendix \ref{section:identifiability} of the Supplementary
Materials) that our estimator also satisfies this property.

\subsection{Errors in variance estimates}
\label{section:var.inflate}

The performance of MOUTHWASH (and other related methods) depends on
obtaining accurate variance estimates $\hat\sigma_j$ in Step 2.  In
practice this can be a major problem. See for example Section 3.9.4 of
\citet{gagnon2013removing}, Section 6 of \citet{gerard2017unifying},
and \citet{perry2013degrees} (who consider a similar model to
\eqref{eq:full.model} with the assumption that the unobserved factors
are orthogonal to the observed covariates).  Intuitively, the
difficulty may arise either from mispecifying the number of latent
factors and thus attributing either too much or too little variation
to the noise \citep{gagnon2013removing}; or it may arise because
$\hat{\bs{\alpha}}$ is assumed known but is in fact estimated and so
the variance in the assumed model \eqref{eq:simplified.model} is too
small.

Both \citet{gagnon2013removing} and \citet{perry2013degrees} address
this issue by applying a multiplicative factor to the variance
estimates. (\citet{gagnon2013removing} selects this factor using
control genes, whereas \citet{perry2013degrees} selects this factor
via asymptotic arguments.)  Here we deal with this issue in a similar
way by including a multiplicative parameter, $\xi > 0$ in
\eqref{eq:simplified.model}.

Specifically, we modify \eqref{eq:simplified.model} to:
\begin{align}
\label{eq:model.var.inflate}
\hat{\vbeta} \sim N_{p}(\bs{\beta} + \hat{\bs{\alpha}}^{\intercal}\bs{z}, \xi\bs{S}),
\end{align}
and estimate $\xi$ along with $g$ and $\bs{z}$. Thus, Step 3 becomes:
 \begin{align}
    \label{eq:mouth.opt.var}
    \begin{split}
      (\hat{g},\hat{\bs{z}}, \hat{\xi}) &= \argmax_{(g,\bs{z}, \xi)\ \in\ \mathcal{U}\times \mathbb{R}^k \times \mathbb{R}^+}\prod_{j = 1}^p \int_{\beta_j}N(\hat{\beta}_j|\beta_j + \hat{\bs{\alpha}}_j^{\intercal}\bs{z}, \xi \shat_j^2)g(\beta_j)\dif\beta_j,
    \end{split}
  \end{align}
  and the posterior distributions in Step 4 are computed conditional
  on $\hat{\xi}$.  We have found that this modification can be vital for
  good performance of MOUTHWASH in practice.

\subsection{Other Bells and Whistles}
\label{section:bells.whistles}

We have implemented several extensions to this approach in our
software. These include i) allowing effects to depend on their
standard errors; ii) extending \eqref{eq:simplified.model} to a $t$
likelihood; iii) introducing a small regularization on the mixing
proportions in $g$ to promote conservative behavior; and iv) reducing
computational burden when $p$ is large by subsampling of genes. These
are described in Appendix \ref{section:additional.bells} of the
Supplementary Materials. (In our practical illustrations here we use
the regularization iii), but not the other features.)

Additionally, to better account for the uncertainty in estimating
$\bs{z}$, we implemented a related procedure called BACKWASH
(\textbf{B}ayesian \textbf{A}djustment for \textbf{C}onfounding
\textbf{K}nitted \textbf{W}ith \textbf{A}daptive \textbf{SH}rinkage)
that places a prior over $\bs{z}$. See Appendix
\ref{section:backwash} of the Supplementary Materials for details.

\section{Empirical Evaluations} \label{section:results}
\label{section:empirical.evaluations}

\subsection{Simulations}
\label{section:simulations}

To compare methods we generated simulated datasets from experimental
data that contain real unwanted variation. Specifically, following
\citet{gerard2017unifying}, we simulated data by first randomly
partitioning real RNA-seq data into two groups to produce ``null''
data, and then modifying it to spike in known amounts of signal.  In
brief, we modify the RNA-seq counts at a randomly selected subset of
genes by ``thinning'' the RNA-seq data, reducing the RNA-seq counts in
one group or the other to make each gene systematically less expressed
in that group.  See Appendix \ref{section:signal.details} of the
Supplementary Materials for details.

Because these simulations start by randomly assigning group labels to
samples, they mimic a randomized experiment where unwanted variation
is independent of treatment. In this sense they represent a
``best-case'' scenario, but with realistic, challenging, levels of
unwanted variation.  Although any simulation is inevitably a
simplification, we believe that these simulations provide a
substantially better guide to method performance in practice than
simulating under an assumed (and undoubtedly imperfect) model.

We used these simulations to compare MOUTHWASH and BACKWASH with nine
other estimation procedures that we follow with either qvalue
\citep{storey2003positive} or \ash{} to estimate FDRs. (Although,
based on \citet{stephens2016false}, we would advocate using the lfsr
rather than FDR or lfdr, here we use FDR to allow comparison with
methods that do not compute the lfsr.)  These nine estimation methods
are:
\begin{enumerate}[noitemsep, nolistsep]
\item OLS:~Ordinary Least Squares. This represents a naive method that
  does not account for unwanted variation.
\item SVA:~The iteratively re-weighted least-squares version of
  Surrogate Variable Analysis \citep{leek2008general}, followed by the
  widely-used ``voom-limma" pipeline \citep{law2014voom} to obtain
  effect estimates and standard errors controlling for the estimated
  surrogate variables.
\item CATErr:~The robust regression version of CATE
  \citep{wang2017confounder} (a variation on LEAPP
  \citep{sun2012multiple}).
\item CATErr+MAD:~CATErr, followed by median centering and median
  absolute deviation (MAD) scaling of the $t$-statistics
  \citep{sun2012multiple, wang2017confounder}. At time of writing this
  was the default option in the {\tt cate} package.  (When applying
  \ash{}, we used the MAD as a multiplicative factor to adjust the
  variances \citep{gerard2017unifying}, rather than scaling the $t$
  statistics.)
\item RUV2 \citep{gagnon2012using}.
\item RUV3 \citep{gerard2017unifying}, with EB variance moderation
  \citep{smyth2004linear}.
\item CATEnc:~the negative controls version of CATE
  \citep{wang2017confounder} (a variant on RUV4
  \citep{gagnon2013removing}), which uses control genes to help
  estimation of confounders.
\item CATEnc+MAD:~CATEnc followed by the same standardization used in
  CATErr+MAD.
\item CATEnc+Cal:~CATEnc where a multiplicative factor, calculated
  using control genes \citep{gagnon2013removing}, was used to adjust
  the variances .
\end{enumerate}
The last five of these methods (RUV2, RUV3, CATEnc, CATEnc+MAD,
CATEnc+Cal) require control genes, and we provided them a random
subset of the actual null genes as controls, again representing a
``best case'' scenario for these methods. We did not adjust for
library size in any method as library size can be considered another
source of unwanted variation \citep{gerard2017unifying}, which these
methods are designed to account for.

We performed simulations with $p=1000$ genes, varying the following
parameters:
\begin{itemize}[noitemsep, nolistsep]
\item The proportion of genes that are null $\pi_0 \in \{0.5, 0.9, 1\}$,
\item The number of samples $n \in \{6, 10, 20, 40\}$,
\item The number of control genes provided to methods that use control
  genes $m \in \{10, 100\}$.
\end{itemize}
We simulated 500 datasets for each combination of $\pi_0$, $n$, and
$m$, and ran all methods on each dataset. We evaluated performances
based on two criteria:~first, the area under their receiver operating
characteristic curve (AUC), a measure of their ability to distinguish
null versus non-null genes; and second, accuracy of estimated
proportion of null genes ($\pi_0$), which is an important step in
providing calibrated FDR estimates.

Figure \ref{figure:auc} compares the AUCs of each method. MOUTHWASH
and BACKWASH have almost identical performance, and the best AUCs in
almost every scenario (SVA methods have better AUC in small sample
sizes with $\pi_0 = 0.5$). This dominance is particular pronounced
when the number of control genes is small ($m = 10$), where methods
that use control genes falter. With $m=100$ high-quality control
genes, methods that use control genes become competitive with
MOUTHWASH and BACKWASH.

\begin{figure}
\begin{center}
\includegraphics[scale = 0.9]{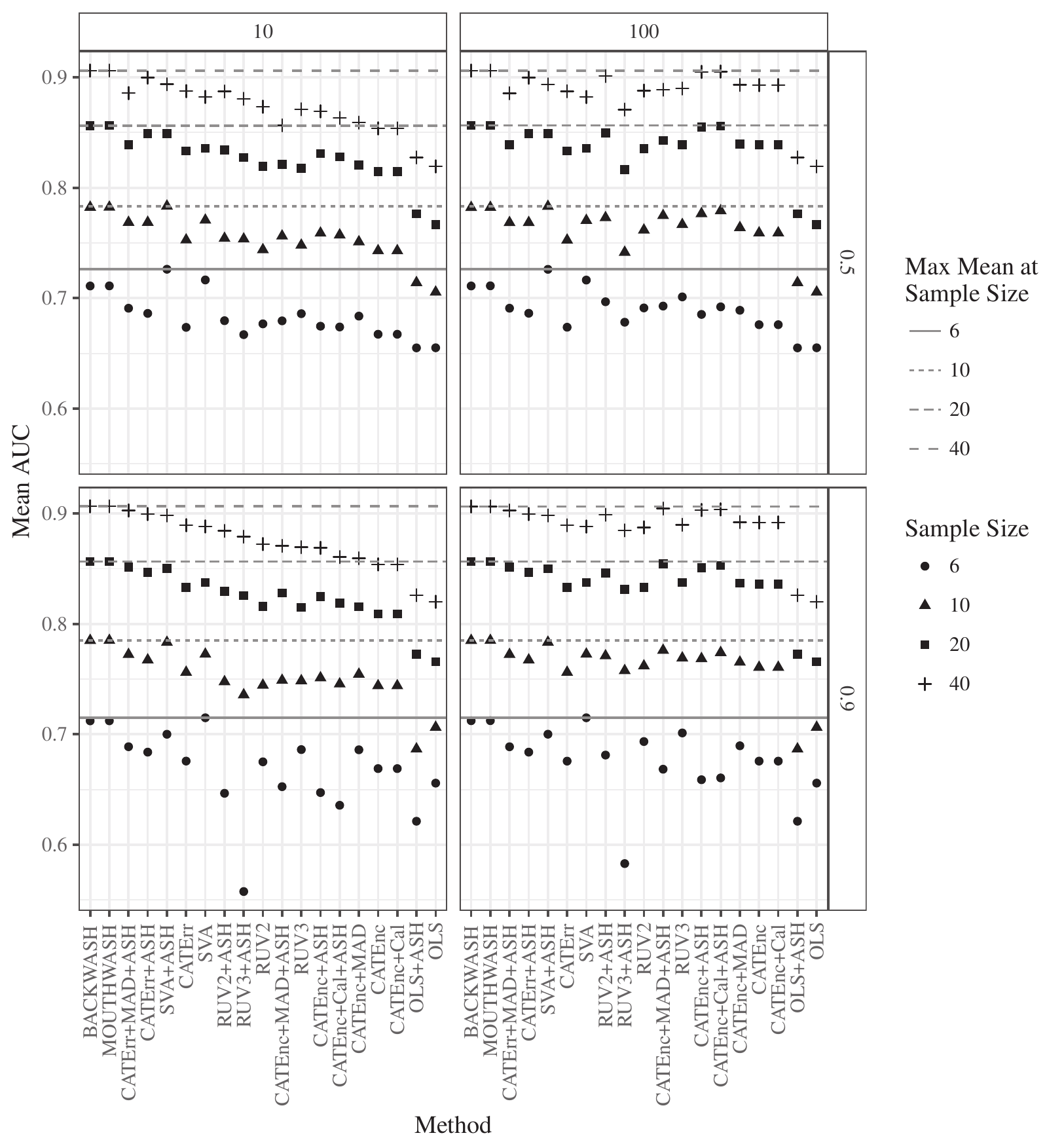}
\end{center}
\caption{Comparison of mean AUCs among methods. Column facets vary
  $m$, the numbers of control genes made available to methods that use
  control genes. Row facets vary $\pi_0$, the proportions of null
  genes. Different symbols represent different sample sizes
  $n$. Horizontal lines indicate the highest mean AUC achieved by any
  method at a given combination of sample size, number of control
  genes, and proportion of null genes. The methods are ordered by
  their performance in the simulations with $n=40, m=10,\pi_0= 0.9$.}
\label{figure:auc}
\end{figure}

Figure \ref{figure:pi0.9} compares the estimates of $\pi_0$ for each
method when the true $\pi_0=0.9$ (results for $\pi_0 = 0.5$ and $1$
are in Supplementary Figures \ref{figure:pi0.5} and
\ref{figure:pi0.1}). Many methods have median estimates of $\pi_0$
very close to the true value of $0.9$. However, the variances of these
estimates are often high. In comparison, the estimates of $\pi_0$ from
MOUTHWASH and BACKWASH are much less variable, and hence more accurate
on average (particularly at higher sample sizes).  CATErr+MAD+ASH and
CATEnc+MAD+ASH work very well for larger sample sizes when $\pi_0$ is
close to 1, but are anti-conservative for small sample sizes and highly
conservative when $\pi_0 = 0.5$ (Supplementary Figure
\ref{figure:pi0.5}).  Results from MOUTHWASH and BACKWASH are almost
identical, suggesting that the additional complexity of BACKWASH is
unnecessary in practice.

\begin{figure}
\begin{center}
\includegraphics[scale = 0.9]{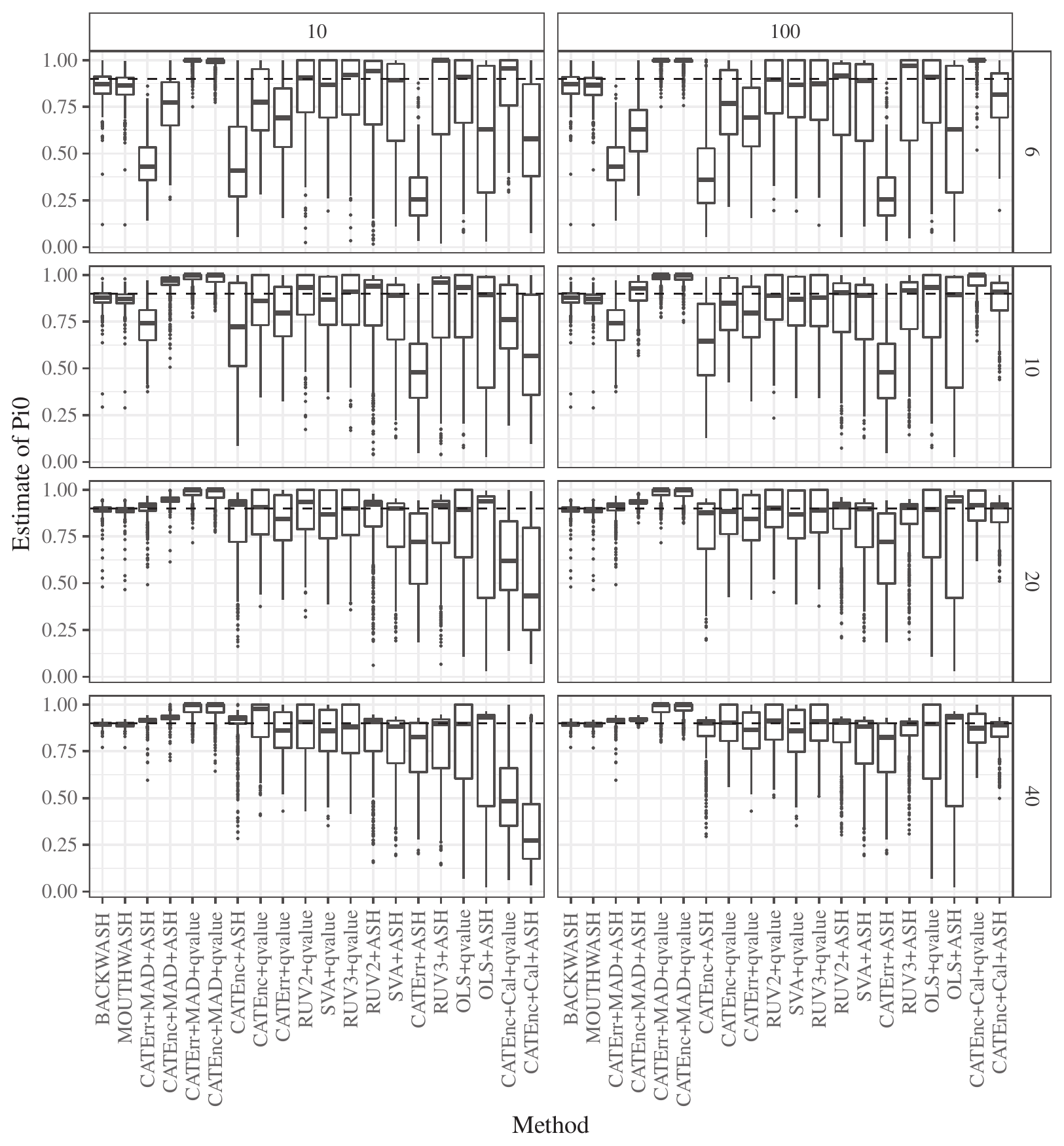}
\end{center}
\caption{Boxplots of estimates of $\pi_0$ for each method (true
  $\pi_0 = 0.9$). Column facets vary $m$, the numbers of control genes
  made available to methods that use control genes. Row facets vary
  $n$, the sample size. The methods are ordered by the mean squared
  error of their estimates in the simulations with
  $n=40, m=10, \pi_0=0.9$. The dashed horizontal line shows $y=0.9$.}
\label{figure:pi0.9}
\end{figure}

\subsection{Computation Time}

Although our MOUTHWASH method is significantly slower than other
existing methods (see Table \ref{tab:comp.time} in the Supplementary
Materials), it is nonetheless practical for realistic-sized data.  For
example, in tests with $n = 100$ and $p = 10\text{,}000$ MOUTHWASH had
a median runtime of 140 seconds (on a 4.0 GHz quad-core PC running
Linux with 32 GB of memory), and runtime is similar for other values
of $n$.  Further speedups could be achieved if needed; see Appendix
\ref{section:additional.bells} of the Supplementary Materials for
discussion.  BACKWASH requires a significantly longer runtime than
MOUTHWASH, and given their similar performance we prefer MOUTHWASH in
practice.

\subsection{GTEx Data}
\label{section:real.data}

To evaluate methods on real data, \citet{gagnon2012using} used the
idea of positive controls. A positive control is a gene that is
\emph{a priori} thought likely to be associated with the covariate of
interest. \citet{gagnon2012using} used the example of sex and sex
chromosomes:~when the covariate of interest is the sex of an
individual, then the genes on sex chromosomes are positive
controls. The best confounder adjustment methods, then, are those that
tend to have more positive controls among their most significant
genes. This idea is also used in \citet{gagnon2013removing} and
\citet{wang2017confounder}.

We applied this positive control method using RNA-seq datasets from 23
non-sex-specific tissues collected by the GTEx project
\citep{gtex2015}. In each dataset we filtered out low-expressed genes
(mean expression level $<$ 10 reads), applied a $\log_2$
transformation to the gene expression count matrix (after adding a
pseudo-count), and averaged results over technical replicates. We used
a design matrix $\bs{X}\in\mathbb{R}^{n \times 2}$ with two columns:~a
column of $1$'s (intercept), and a column of indicators for sex.  We
applied the same methods as in Section \ref{section:simulations} to
all 23 datasets.  For methods that require negative controls we
followed \citet{gagnon2012using} in using housekeeping genes as
negative controls (although opinions seem divided on the general
appropriateness of this strategy;~see \citet{zhang2015do} for a
detailed discussion). Specifically, we used the list of housekeeping
genes from \citet{eisenberg2013human}, but excluding sex-chromosome
genes. (A newer list of housekeeping genes was released by
\citet{lin2017housekeeping} based on single cell sequencing
results. We repeat our following analyses in Appendix
\ref{section:lin} of the Supplementary Materials using this newer
list. The results of Appendix \ref{section:lin} are similar to those
obtained here.)

To compare methods we took the most significant 100 genes for each
method on each tissue and counted how many of these genes are on a sex
chromosome ($s$). We divided $s$ for each method by the maximum $s$
among all methods within a tissue.  Figure \ref{figure:prop.max} shows
the results, with white indicating better performance (larger
$s$). Methods are ordered from left to right by their median
performance.  Overall most methods performed comparably, CATEnc
variants and SVA the notable exceptions, with SVA performing
particularly poorly on a subset of the tissues. MOUTHWASH was among
the best-performing methods of the ASH-variants, along with CATErr+ASH
and CATErr+MAD+ASH.

\begin{figure}
\begin{center}
\includegraphics[scale=0.9]{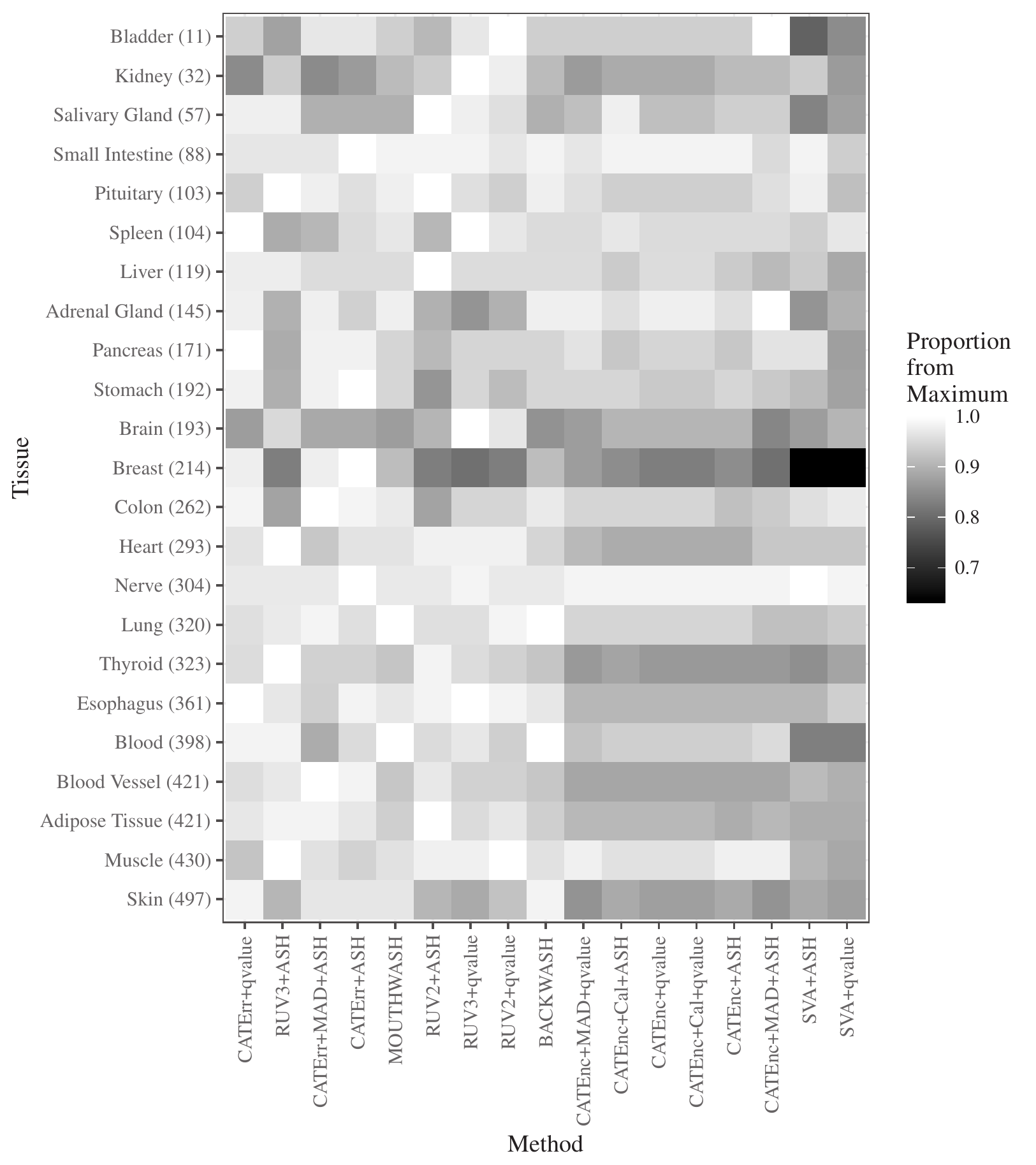}
\caption{Comparison of methods based on positive controls. For each
  method we computed the proportion $s$ of their most significant
  genes (testing for association with sex) that were on a sex
  chromosome. We then divided the $s$ for each method by the maximum
  $s$ among all methods. The tissues are ordered by sample size, and
  the methods are ordered by maximum median proportion. White
  indicates better performance than dark.}
\label{figure:prop.max}
\end{center}
\end{figure}

Though many methods performed similarly in ranking the most
significant genes, it would be wrong to think that they all produced
the same results. In particular, the methods differ considerably in
their assessments of significance and estimates of the proportion of
null genes ($\pi_0$). For example Table \ref{table:pi0hat} shows
median estimates of $\pi_0$ for each method across tissues. The
estimates range from 0.28 to almost 1. Generally \ash{}-based methods
produce smaller estimates of $\pi_0$ than qvalue-based methods, with
the exceptions of MOUTHWASH, BACKWASH, and those methods whose
variances were calibrated either using MAD or control genes. Though we
do not know the true value of $\pi_0$ here, and it is possible that
there are many non-sex chromosome genes with expression differences
between the sexes, it is interesting that MOUTHWASH and BACKWASH, the
best-performing methods in the simulations, estimate that most genes
are null.

\begin{table}[ht]
  \centering
  \caption{Median estimate of $\pi_0$ for each method across tissues
    when testing for differences between sexes.}
  \label{table:pi0hat}
  \begin{tabular}{lr}
    \hline
    Method            & $\hat{\pi}_0$ \\
    \hline
    SVA+ASH           & 0.28 \\
    CATErr+ASH        & 0.33 \\
    RUV3+ASH          & 0.38 \\
    OLS+ASH           & 0.40 \\
    RUV2+ASH          & 0.43 \\
    CATEnc+ASH        & 0.55 \\
    SVA+qvalue        & 0.70 \\
    RUV3+qvalue       & 0.75 \\
    CATErr+qvalue     & 0.76 \\
    CATEnc+qvalue     & 0.78 \\
    RUV2+qvalue       & 0.79 \\
    OLS+qvalue        & 0.80 \\
    CATEnc+Cal+ASH    & 0.89 \\
    CATEnc+Cal+qvalue & 0.90 \\
    CATErr+MAD+ASH    & 0.91 \\
    MOUTHWASH         & 0.99 \\
    CATEnc+MAD+ASH    & 0.99 \\
    BACKWASH          & 0.99 \\
    CATEnc+MAD+qvalue & 1.00 \\
    CATErr+MAD+qvalue & 1.00 \\
    \hline
  \end{tabular}
\end{table}

Another, perhaps still more striking, feature of the MOUTHWASH and
BACKWASH results is shown in Figure \ref{figure:rank.lfdr} which shows
the median lfdr for each \ash{}-based method as one moves down their
list of the top 500 most significant genes. For both MOUTHWASH and
BACKWASH the estimated lfdrs sharply increase from 0 at around 50-100
genes. Furthermore, this sharp increase occurs just where the ranking
starts to move away from genes on sex chromosomes (the shade moving
from black, red in the online version, to light grey).  Again, we do
not know the truth here, but the behavior of MOUTHWASH/BACKWASH is
consistent with most of the true differences being at genes on sex
chromosomes, and is strikingly different from most other methods. The
MAD-calibrated methods also exhibit this behavior. However, in
simulations with large sample sizes the MAD methods always estimated
few genes to be significant, even when half of the genes were
differentially expressed (Supplementary Figure \ref{figure:pi0.5}),
making it difficult to rely on their results.  The increase in lfdr of
CATEnc+Cal+ASH is not nearly as fast as that of MOUTHWASH and BACKWASH
and much less consistent across tissues (Supplementary Figure
\ref{figure:lfdr.full}).

\begin{figure}
\begin{center}
\includegraphics[scale = 0.9]{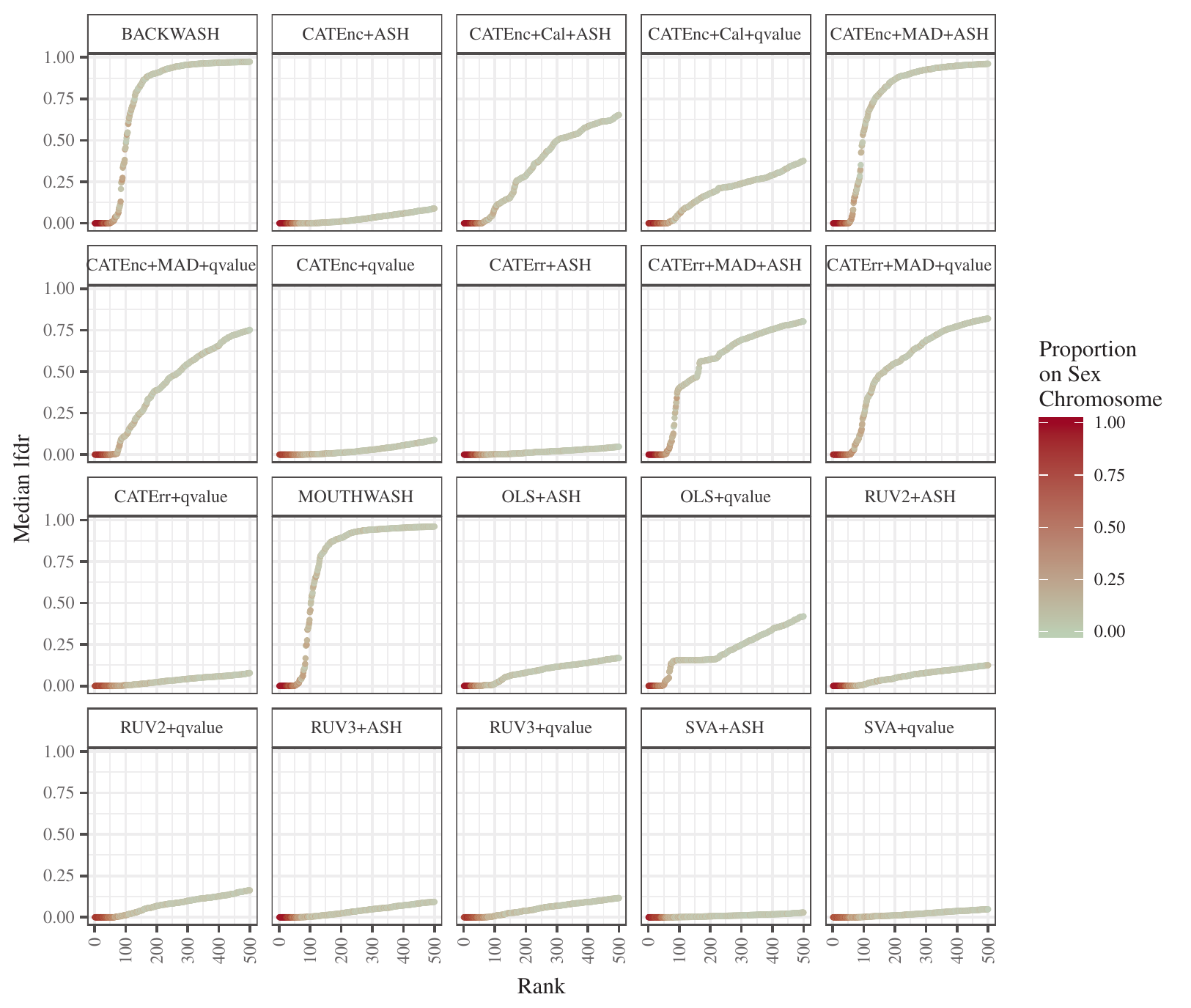}
\caption{Figure showing how median lfdr changes through the list of
  500 most significant genes.  For each method we sorted the lfdr's
  across genes in each tissue, and took the median lfdr across tissues
  at each rank. (Results for each tissue are in Supplementary Figure
  \ref{figure:lfdr.full}).  The color of each point indicates the
  proportion of tissues that have a sex chromosome gene at that rank
  (red indicating higher proportion).}
\label{figure:rank.lfdr}
\end{center}
\end{figure}

\section{Discussion}
\label{section:discussion}

We have presented a simple modular approach to combining
two key ideas for the analysis of genomic experiments:~EB shrinkage to
induce sparsity on effects, and FA to capture unwanted variation. Our
results demonstrate that these new methods have competitive
performance compared with a range of existing methods. They also
highlight that even when methods agree closely in their rankings of
genes (by strength of evidence against the null), they can vary widely
in their assessments of significance (e.g.~estimated FDRs).  Indeed,
even within a single ``method'', significance assessments can be
sensitive to details of how it is applied.  For example, in our
experience the way that variance estimates are dealt with can have a
very dramatic effect on estimated FDRs and related quantities. In
MOUTHWASH, the introduction of the variance inflation parameter $\xi$
has a substantial impact, and reduces the potential for
anti-conservative (under-)estimates of FDR.

Although we have used the term ``genomic experiments'', our methods
are really aimed at a particular type of genomic experiment:~where
there is a single covariate which may be associated with many measured
variables (e.g.~a differential expression experiment, where treatment
may affect the expression of many genes).  One different type of
genomic experiment that we do not address here is experiments to
identify ``expression Quantitative Trait Loci'' (eQTLs), which are
genetic variants associated with gene expression. The issues of sparse
effects, and unwanted variation, certainly arise when attempting to
identify eQTLs.  And some methods to deal with these issues have been
developed with a particular focus on eQTL studies
\citep[e.g.][]{stegle2008accounting, stegle2010bayesian,
  stegle2012using, fusi2012joint}. However, eQTL studies also differ
in a crucial way from the studies considered here. Specifically,
typical (population-based) eQTL studies involve many covariates
(different genetic variants), each of which is typically associated
with just one or a few genes (the strongest eQTLs are locally acting),
rather than a single covariate associated with many genes.  This
difference is fundamental:~when dealing with a single covariate that
may affect many genes, it is both particularly important and
particularly delicate to remove unwanted variation {\it without also
  removing the effect of interest}, whereas this issue is less
pressing in eQTL studies. (Population substructure in the genotype
data is a separate issue, which we do not discuss here.)  Indeed, in
eQTL studies, i) unwanted variation in expression data is rarely
associated with the covariates of interest, and so usually decreases
power rather than creating false positives; ii) when removing unwanted
variation one need not be too concerned about accidentally removing
signal of interest, and even very simple approaches such as using PCA
on the expression matrix typically improve power
\citep{pickrell2010understanding}. Neither of these hold in the
settings we focused on here.

One key feature of our approach is that, like many of the most popular
current approaches, it is designed to be {\it modular}. In particular,
although our results here are all based on using a simple FA
(truncated PCA), our methods could easily accommodate other approaches
to FA.  For example, it could accommodate Bayesian methods such as SFA
\citep{engelhardt2010analysis}, or the FA implemented in the software
PEER, which use a normal prior distribution on both the factors and
loadings \citep{stegle2010bayesian}. In principle there could be
objections to simply plugging these FAs into our approach:~for
example, the argument that the factor estimates are only identified up
to their row-space does not always hold for Bayesian FA, so the
property of MOUTHWASH that it depends on factor estimates only through
their row-space might be considered suspect. Put another way, one
could argue that when using prior distributions on the factors the
modular approach to fitting \eqref{eq:full.model} is suboptimal, and
could be improved by a purpose-built joint fitting routine. However,
the benefits of modular approaches are so great that it nonetheless
seems worthwhile to explore these ideas.

\section*{Software}
All methods introduced in this paper are implemented in the R package
\texttt{vicar} available at
\url{https://github.com/dcgerard/vicar}. Code to reproduce all results
in this paper is available at
\url{https://github.com/dcgerard/mouthwash_sims} (DOI:~10.5281/zenodo.1248856).

\section*{Funding}
This work was supported by the National Institutes of Health [grant
number HG002585]; and the Gordon and Betty Moore Foundation [Grant
number GBMF \#4559].

\section*{Acknowledgments}

Some of the original code for simulating the datasets in Section
\ref{section:simulations} was written by Mengyin Lu, to whom we give
our thanks.\\

\clearpage
\appendix
\section{Supplementary Materials}

\subsection{Simple illustration}
\label{section:simple.illustration}

We present a simple example that illustrates the need to address
unwanted variation.  We took the top 10,000 expressed genes of an
RNA-seq data on human muscle samples \citep{gtex2015} and randomly
sampled six individuals, which we randomly split into two groups. All
genes are thus theoretically ``null'' (unassociated with group
assignment). However, when we apply the \ash{} method from
\citet{stephens2016false} to the OLS estimates of $\bhat,\shat$ from
these null data, it infers that almost all genes are non-null
(estimated proportion of null genes, $\pi_0$, 0.0077), and indicates
almost every gene is significant with lfdr close to 0 (Supplementary
Figure \ref{figure:lfdr}, left panel). This behavior is not atypical
(Supplementary Figure \ref{figure:pi0.1}, top panels). Applying \ash{}
with effects estimated using a more sophisticated RNA-seq data
analysis pipeline instead of OLS \citep{law2014voom} slightly improved
matters (Supplementary Figure \ref{figure:lfdr}, second panel from the
left). In contrast applying MOUTHWASH and BACKWASH produced
essentially no significant genes, with lfdrs clustering closer to 1
(Supplementary Figure \ref{figure:lfdr}, right panels).

\subsection{Details of modular approach to fitting Factor-augmented Regression Model}
\label{section:detailed.review}

Many methods (e.g.~RUV4, LEAPP, and CATE) use a two-step approach to
fitting the factor-augmented regression model \eqref{eq:full.model}.
\citet{wang2017confounder} provide an elegant framing of this two-step
approach as a rotation followed by estimation in two independent
models.  Since this plays a key role in our methods we review it here.

For convenience we repeat the factor-augmented regression model here:
\begin{align}
  \label{eq:full.model2}
  \bs{Y}_{n\times p} = \bs{X}_{n \times k}\bs{\beta}_{k \times p} +
  \bs{Z}_{n \times q}\bs{\alpha}_{q \times p} + \bs{E}_{n\times p},
\end{align}
where we assume the number of samples $n$ is larger than the number of
covariates $k$.  As mentioned in the main text, we assume that only
one covariate is of interest. Without loss of generality, we will
assume that the ``uninteresting'' covariates are located in the first
$k - 1$ columns of $\bs{X}$ and the ``interesting'' covariate is in
the last column of $\bs{X}$. Thus we can partition
$\bs{\beta} =
\genfrac{(}{)}{0pt}{1}{\bs{\beta}_1}{\bs{\beta}_2^\intercal}$ so that
$\bs{\beta}_1 \in \mathbb{R}^{(k - 1) \times p}$ contains the
coefficients for the first $k - 1$ covariates and
$\bs{\beta}_2 \in \mathbb{R}^{p}$ contains the coefficients for the
covariate of interest.

Let $\bs{X} = \bs{Q}\bs{R}$ be the QR decomposition of $\bs{X}$, where
$\bs{Q} \in \mathbb{R}^{n \times n}$ is an orthogonal matrix
($\bs{Q}^{\intercal}\bs{Q} = \bs{Q}\bs{Q}^{\intercal} = \bs{I}_n$) and
$\bs{R}_{n \times k} = \genfrac{(}{)}{0pt}{1}{\bs{R}_1}{\bs{0}}$,
where $\bs{R}_1 \in \mathbb{R}^{k \times k}$ is an upper-triangular
matrix. Pre-multiplying \eqref{eq:full.model2} by $\bs{Q}^{\intercal}$
on both sides yields:~\def\tY{\tilde{\bs{Y}}} \def\tZ{\tilde{\bs{Z}}}
\def\tE{\tilde{\bs{E}}}
\begin{equation}
  \bs{Q}^{\intercal}\bs{Y} =
  \bs{R}\bs{\beta} +
  \bs{Q}^{\intercal}\bs{Z}\bs{\alpha} +
  \bs{Q}^{\intercal}\bs{E},
  \end{equation}
  which we write
  \begin{equation} \label{eq:rotated}
  \tY =
  \bs{R}\bs{\beta} +
  \tZ\bs{\alpha} +
  \tE
\end{equation}
where $\tY:=\bs{Q}^{\intercal}\bs{Y}$, $\tZ:=\bs{Q}^{\intercal}\bs{Z}$,$\tE:=\bs{Q}^{\intercal}\bs{E}$.

By exploiting the fact that $\bs{R}_1$ is upper triangular,
\eqref{eq:rotated} can be rewritten as:
\begin{align}
  \label{eq:Y1.model}
  \tY_1 &= \bs{R}_{11}\bs{\beta}_1 + \bs{r}_{12}\bs{\beta}_2^{\intercal} +  \tZ_1\bs{\alpha} +  \tE_1 \\
  \label{eq:Y2.model}
  \tilde{\bs{y}}_2^\intercal &= \phantom{\bs{R}_{11}\bs{\beta}_1 +\ }
   r_{22}\bs{\beta}_2^\intercal +
   \tilde{\bs{z}}_2^\intercal \bs{\alpha} +  \tilde{\bs{e}}_2^\intercal \\
  \label{eq:Y3.model}
  \tY_3 &=
 \phantom{\bs{R}_{11}\bs{\beta}_1 + \bs{r}_{12}\bs{\beta}_2^{\intercal} +\ }
   \tZ_3\bs{\alpha} +  \tE_3.
\end{align}
Here
\begin{equation}
  \bs{R}_1 =
  \left(
    \begin{array}{cc}
      \bs{R}_{11} & \bs{r}_{12}\\
      \bs{0} & r_{22}
    \end{array}
  \right),
\end{equation}
and we have conformably partitioned each of $\tY,\tZ,\tE$ into i)
their first $k-1$ rows; ii) their $k$th row; iii) the remaining $n-k$
rows, with for example
\begin{equation}
  \tY =
  \left(
    \begin{array}{c}
      \tY_1 \\
      \tilde{\bs{y}}_2^\intercal \\
      \tY_3
    \end{array}
  \right).
\end{equation}
We have used lower-case
$\tilde{\bs{y}}_2,\tilde{\bs{z}}_2,\tilde{\bs{e}}_2$ to indicate that
these quantities are vectors.
%  Let $\bs{Y}_1 \in \mathbb{R}^{(k - 1) \times p}$ be the
% first $k - 1$ rows of $\bs{Q}^{\intercal}\bs{Y}$,
% $\bs{y}_2 \in \mathbb{R}^{p}$ be the $k$th row of
% $\bs{Q}^{\intercal}\bs{Y}$, and
% $\bs{Y}_3 \in \mathbb{R}^{(n-k) \times p}$ be the last $n-k$ rows of
% $\bs{Q}^{\intercal}\bs{Y}$. Conformably partition
% $\bs{Q}^{\intercal}\bs{Z}$ into $\bs{Z}_1$, $\bs{Z}_2$, and
% $\bs{Z}_3$, and $\bs{Q}^{\intercal}\bs{E}$ into $\bs{E}_1$,
% $\bs{e}_2$, and $\bs{E}_3$. Let

The error terms in \eqref{eq:Y1.model},
\eqref{eq:Y2.model}, and \eqref{eq:Y3.model} are independent, because
$\tE$ is equal in distribution to $\bs{E}$, which is matrix normal
\citep{srivastava1979introduction, dawid1981some} with independent
rows.

This rewriting suggests the following two-step estimation procedure,
which in essence is the approach used by RUV4, LEAPP, and CATE:
\begin{enumerate}
\item Estimate $\bs{\alpha}$ and the $\sigma_j$'s using
  \eqref{eq:Y3.model}, specifically by applying some kind of FA to $\tY_3$.
  Call these estimates $\hat{\bs{\alpha}}$ and
  $\hat{\sigma}_j$.
\item Estimate $\bs{\beta}_2$ and $\tilde{\bs{z}}_2$ given
  $\bs{\alpha}$ and the $\sigma_j$'s using \eqref{eq:Y2.model}, which
  can be written:
\begin{align}
\label{eq:reduced.model}
\tilde{\bs{y}}_2 \sim N_p(r_{22}\bs{\beta}_2 + \hat{\bs{\alpha}}^\intercal \tilde{\bs{z}}_2, \hat{\bs{\Sigma}}).
\end{align}
\end{enumerate}
As equation \eqref{eq:Y1.model} contains the nuisance parameters
$\bs{\beta}_1$, it is ignored.

In the main text we simplified the description by describing Step 1 as
applying FA to the matrix of residuals obtained from regressing the
columns of $\bs{Y}$ on $\bs{X}$ \eqref{eqn:FA}. As noted by
\citet{wang2017confounder}, for many choices of FA, applying FA to
$\tY_3$ is equivalent to applying FA to the residuals because $\tY_3$
and the matrix of residuals have the same sample covariance
matrix. (However, the mathematical derivation is clearer using
$\tY_3$, and our software implementation actually uses $\tY_3$.)

Both MOUTHWASH and BACKWASH use this approach.  Indeed, model
\eqref{eq:simplified.model} is the same as \eqref{eq:reduced.model}
with a simple change of notation:
\begin{align}
\label{eq:divide.r22} \hat{\bs\beta} &:= \bs{y}_2 / r_{22},\ \hat{\bs{\alpha}} := \hat{\bs{\alpha}} / r_{22},\ \bs{S} := \hat{\bs{\Sigma}} / r_{22}^2, \text{ and}\\
\label{eq:drop.2}\bs{z} &:= \tilde{\bs{z}}_2 \text{ and } \bs{\beta} := \bs{\beta}_2.
\end{align}
It is easy to show that $\hat{\bs\beta} = \bs{y}_2 / r_{22}$ are equal
to the OLS estimates of $\bs{\beta}_2$ obtained by regressing each
column of $\bs{Y}$ on $\bs{X}$.

\subsection{Estimating linear combinations of the rows of $\bs{\beta}$}
\label{section:linear.combo}

Suppose a researcher is interested not in a single row of
$\bs{\beta}$, but rather a single linear combination of $\bs{\beta}$,
$\bs{c}^{\intercal}\bs{\beta}$, for some $\bs{c} \in
\mathbb{R}^k$. For example, if one were interested in a simple
comparison of the effect of the first and second covariates,
$\beta_{1j} - \beta_{2j}$ (for all $j = 1,\ldots,p$), then
$\bs{c}^{\intercal} = (1, -1, 0, 0, \ldots, 0)$. As long as only one
linear combination of the rows of $\bs{\beta}$ is of interest,
MOUTHWASH and BACKWASH may be applied.

To do so, let the columns of $\bs{L} \in \mathbb{R}^{k \times (k-1)}$
be any orthonormal basis of the orthogonal complement of the space
spanned by $\bs{c}$ (e.g.~take the columns of $\bs{L}$ to be the first
$k-1$ eigenvectors of
$\bs{I}_{k-1} - \bs{c}\bs{c}^{\intercal}/\|\bs{c}\|^2$). Then,
assuming model \eqref{eq:full.model2}, we have
\begin{align}
  \label{eq:int.c.l.model}
  \bs{Y} = \bs{X}(\bs{c}/\|\bs{c}\|^2, \bs{L}) \genfrac{(}{)}{0pt}{1}{\bs{c}^{\intercal}}{\bs{L}^{\intercal}}\bs{\beta} +
  \bs{Z}\bs{\alpha} + \bs{E},
\end{align}
since
\begin{align}
  \genfrac{(}{)}{0pt}{1}{\bs{c}^{\intercal}}{\bs{L}^{\intercal}}^{-1} = (\bs{c}/\|\bs{c}\|^2, \bs{L}).
\end{align}
Now let $\tilde{\bs{X}} := (\bs{X}\bs{c}/\|\bs{c}\|^2, \bs{X}\bs{L})$
and
$\tilde{\bs{\beta}} := (\bs{\beta}^{\intercal}\bs{c},
\bs{\beta}^{\intercal}\bs{L})^{\intercal}$. Then equation
\eqref{eq:int.c.l.model} is equal to
\begin{align}
  \bs{Y} = \tilde{\bs{X}}\tilde{\bs{\beta}} +
  \bs{Z}\bs{\alpha} + \bs{E},
\end{align}
where the first row of $\tilde{\bs{\beta}}$ is equal to
$\bs{c}^{\intercal}\bs{\beta}$. We may now apply the modular approach
used to fit MOUTHWASH and BACKWASH (as in Section
\ref{section:detailed.review}) using $\tilde{\bs{X}}$ instead of
$\bs{X}$. Here, the first column of $\bs{\tilde{X}}$ is the covariate
of interest and its corresponding coefficients (the first row of
$\tilde{\bs{\beta}}$) represent the linear combination of the rows of
$\bs{\beta}$ that are of interest.

\subsection{MOUTHWASH optimization details}
\label{section:em}

\subsubsection{EM algorithm for normal likelihood and normal mixtures}
\label{section:em.normal}

Here we describe the EM algorithm used for solving the optimization
step \eqref{eq:mouth.opt} in MOUTHWASH when the mixture components in
\eqref{eq:prior.mixture} are normal. (For the generalization to a
$t_\nu$ likelihood and the case where the mixture components are
uniform see the coordinate ascent updates in the next subsection).

The model is:
\begin{align}
p(\hat{\vbeta}|\bs{z},\bs{\beta}, \xi) &= \prod_{j = 1}^pN(\hat{\beta}_j|\beta_j + \hat{\bs{\alpha}}_j^{\intercal}\bs{z}, \xi s_{jj}^2)\\
p(\bs{\beta}) &= \prod_{j = 1}^p g(\beta_j|\bs{\pi})\\
g(\beta_j|\bs{\pi}) &= \pi_0\delta_0(\beta_j) + \sum_{m = 1}^M\pi_mN(\beta_j|0,\tau_m^2). \label{eq:gnormalmix}
\end{align}
By integrating over $\bs{\beta}$, we have
\begin{align}
p(\hat{\vbeta}|\bs{z},\bs{\pi}, \xi) &= \prod_{j=1}^pp(\hat{\beta}_j|\bs{z},\bs{\pi}, \xi) \label{eq:Y.integrateB}\\
p(\hat{\beta}_j|\bs{z},\bs{\pi}, \xi) &= \pi_0N(\hat{\beta}_j|\hat{\bs{\alpha}}_j^{\intercal}\bs{z},\xi s_{jj}^2) + \sum_{m=1}^M\pi_m N(\hat{\beta}_j|\hat{\bs{\alpha}}_j^{\intercal}\bs{z},\xi s_{jj}^2 + \tau_m^2).\label{eq:Yj.integrateB}
\end{align}

Our goal is to maximize the likelihood \eqref{eq:Y.integrateB}
over $\bs{\pi}$, $\bs{z}$, and $\xi$. In fact
we consider the slightly more general problem
of optimizing the penalized
likelihood
\begin{align}
p(\hat{\vbeta}|\bs{z},\bs{\pi}, \xi)h(\bs{\pi}|\bs{\lambda}),
\end{align}
where $h(\bs{\pi}|\bs{\lambda})$ is defined in \eqref{eq:penalty.lambda}.

To develop the EM algorithm, we use the usual
approach for mixtures, introducing indicator variables
that indicate which component of the mixture
(\ref{eq:gnormalmix}) gave rise to each $\beta_j$.
Let $\bs{w}_j = (w_{0j},\ldots,w_{Mj})^{\intercal}$ denote a one-of-$(M + 1)$
indicator vector representing the mixture component
that gave rise to $\beta_j$, so
$\sum_{m=0}^M w_{mj}= 1$ and $p(w_{mj}=1) = \pi_m$. Then the complete
data likelihood is:
\begin{align}
\begin{split}
&p(\hat{\vbeta},\bs{W}|\bs{z},\bs{\pi}, \xi)h(\bs{\pi}|\bs{\lambda}) \\
=& \left(\prod_{m=0}^M\pi_m^{\lambda_m - 1}\right)\prod_{j=1}^p\exp\left\{\sum_{m=0}^Mw_{mj}\log(\pi_m)-\left(\sum_{m=0}^M\frac{w_{mj}}{2(\xi s_{jj}^2 + \tau_m^2)}\right)(\hat{\beta}_j - \hat{\bs{\alpha}}_j^{\intercal}\bs{z})^2\right.\\
&\left.-\frac{1}{2}\sum_{m=0}^Mw_{mj}\log(\xi s_{jj}^2 + \tau_m^2) - \frac{1}{2}\log(2\pi)\right\}.
\end{split}
\end{align}
And the complete data log-likelihood is:
\begin{align}
\begin{split}
l_\text{complete}(\bs{z},\bs{\pi},\xi; \hat{\bs\beta},\bs{W}) &:=\sum_{j=1}^p\left\{\sum_{m=0}^Mw_{mj}\log(\pi_m) -\left(\sum_{m=0}^M\frac{w_{mj}}{2(\xi s_{jj}^2 + \tau_m^2)}\right)(\hat{\beta}_j - \hat{\bs{\alpha}}_j^{\intercal}\bs{z})^2 -\right.\\
&\left.\frac{1}{2}\sum_{m=0}^Mw_{mj}\log(\xi s_{jj}^2 + \tau_m^2) - \frac{1}{2}\log(2\pi)\right\} + \sum_{m = 0}^M(\lambda_m - 1)\log(\pi_m). \label{eq:complete}
\end{split}
\end{align}

Let $\bs{\pi}^{(old)}$, $\bs{z}^{(old)}$, and $\xi^{(old)}$ be the
current values of the parameters. Then
\begin{align}
p(w_{mj}=1|\hat{\beta}_j,\bs{z}^{(old)},\bs{\pi}^{(old)}, \xi^{(old)}) = \frac{\pi_m^{(old)}N(\hat{\beta}_j|\hat{\bs{\alpha}}_j^{\intercal}\bs{z}^{(old)},\xi^{(old)} s_{jj}^2 + \tau_m^2)}{\sum_{i = 0}^M\pi_i^{(old)}N(\hat{\beta}_j|\hat{\bs{\alpha}}_j^{\intercal}\bs{z}^{(old)},\xi^{(old)} s_{jj}^2 + \tau_i^2)} =:~q_{mj}. \label{eq:def.tmj}
\end{align}

The E-step of the EM algorithm involves forming
the expected complete data log-likelihood,
which simply involves replacing $w_{kj}$ with $q_{kj}$
in \eqref{eq:complete}:
\begin{align}
\begin{split}
&\sum_{j=1}^p\left\{\sum_{m=0}^Mq_{mj}\log(\pi_m) -\left(\sum_{m=0}^M\frac{q_{mj}}{2(\xi s_{jj}^2 + \tau_m^2)}\right)(\hat{\beta}_j - \hat{\bs{\alpha}}_j^{\intercal}\bs{z})^2 -\right.\\
&\left.\frac{1}{2}\sum_{m=0}^Mq_{mj}\log(\xi s_{jj}^2 + \tau_m^2) - \frac{1}{2}\log(2\pi)\right\} + \sum_{m = 0}^M(\lambda_m - 1)\log(\pi_m). \label{eq:given.t}
\end{split}
\end{align}
The M-step then involves optimizing this over $\bs{z}$, $\bs{\pi}$, and $\xi$.

The update for $\pi$ follows by recognizing the kernel of a
multinomial likelihood
  \begin{align}
    \pi_m &\leftarrow \frac{\sum_{j=1}^pq_{mj} + \lambda_m - 1}{\sum_{\ell=0}^M\left(\sum_{j = 1}^pq_{\ell j} + \lambda_{\ell} - 1\right)}\\
    &= \frac{\sum_{j=1}^pq_{mj} + \lambda_m - 1}{\sum_{\ell=0}^M\sum_{j = 1}^pq_{\ell j} + \sum_{\ell=0}^M\lambda_{\ell} - M}\\
    &= \frac{\sum_{j=1}^pq_{mj} + \lambda_m - 1}{p - M + \sum_{\ell=0}^M\lambda_{\ell} }.
  \end{align}
In the case when there is no penalty, $\lambda_1 = \cdots = \lambda_M = 1$, we have
\begin{align}
\pi_m \leftarrow \frac{1}{p}\sum_{j=1}^pq_{mj}.
\end{align}

We then perform a few iterative updates on $\xi$ and $\bs{z}$. To
update $\bs{z}$ given $\xi$, we note that optimizing
\eqref{eq:given.t} over $\bs{z}$ is the same as weighted linear
regression with diagonal weight (precision) matrix
$\bs{\Theta}_{\xi} \in \mathbb{R}^{p\times p}$ with diagonal elements
$\theta_{\xi, jj} = \sum_{m=0}^M\frac{q_{mj}}{\xi s_{jj}^2 +
  \tau_m^2}$. We get
\begin{align}
\bs{z} \leftarrow (\hat{\bs{\alpha}}\bs{\Theta}_{\xi}\hat{\bs{\alpha}}^{\intercal})^{-1}\hat{\bs{\alpha}}\bs{\Theta}_{\xi} \hat{\vbeta}.
\end{align}
To update $\xi$ given $\bs{z}$ we can use some standard univariate
optimizer, such as Brent's method \citep{brent1971algorithm}.

One step of this EM algorithm is presented in Algorithm
\ref{algorithm:normal.normal}. Iteratively performing the steps in
Algorithm \ref{algorithm:normal.normal} is guaranteed to increase the
likelihood toward a local maximum.

\begin{algorithm}
  \caption{EM Algorithm for Normal Mixtures Prior and Normal Likelihood}
  \label{algorithm:normal.normal}
  \begin{algorithmic}[1]
    \STATE Given the current values of the parameters in our model, $\bs{\pi}^{(old)}$, $\bs{z}^{(old)}$,
    and $\xi^{(old)}$, let $q_{mj}$ be defined as in \eqref{eq:def.tmj}.
    \STATE Set $\pi_m = \frac{\sum_{j=1}^pq_{mj} + \lambda_m - 1}{p - M + \sum_{\ell=0}^M\lambda_{\ell} }$,
    \REPEAT
    \STATE Let $\bs{\Theta}_{\xi}$ be a diagonal matrix with diagonal elements $\theta_{\xi, jj} = \sum_{m=0}^M\frac{q_{mj}}{\xi s_{jj}^2 + \tau_m^2}$.
    \STATE Set $\bs{z} = (\hat{\bs{\alpha}}\bs{\Theta}_{\xi}\hat{\bs{\alpha}}^{\intercal})^{-1}\hat{\bs{\alpha}}\bs{\Theta}_{\xi} \hat{\vbeta}$.
    \STATE Update $\xi$ given $\bs{z}$ and $\bs{\pi}$ by maximizing \eqref{eq:given.t} using Brent's method.
    \UNTIL {convergence}
  \end{algorithmic}
\end{algorithm}

\subsubsection{Coordinate Ascent for $t_{\nu}$-Uniform Problem}
\label{section:em.uniform}

Here we describe the optimization steps used for the generalization of
MOUTHWASH to a $t_\nu$ likelihood \eqref{eq:t.likelihood} in the case
where the mixture components are uniform. (This also applies to the
normal likelihood with uniform components by setting $\nu=\infty$).

The model is:
\begin{align}
p(\hat{\vbeta}|\bs{z},\bs{\beta}, \xi) &= \prod_{j = 1}^pt_{\nu}(\hat{\beta}_j|\beta_j + \hat{\bs{\alpha}}_j^{\intercal}\bs{z}, \xi s_{jj}^2)\\
p(\bs{\beta}) &= \prod_{j = 1}^p g(\beta_j|\bs{\pi})\\
g(\beta_j|\bs{\pi}) &= \pi_0\delta_0(\beta_j) + \sum_{m = 1}^M\pi_mU(\beta_j|a_m,b_m).
\end{align}
By integrating over $\bs{\beta}$, we have
\begin{align}
p(\hat{\vbeta}|\bs{z},\bs{\pi}, \xi) &= \prod_{j=1}^pp(\hat{\beta}_j|\bs{z},\bs{\pi}, \xi) \label{eq:Y.integrateB.t}\\
p(\hat{\beta}_j|\bs{z},\bs{\pi}, \xi) &= \pi_0t_{\nu}(\hat{\beta}_j|\hat{\bs{\alpha}}_j^{\intercal}\bs{z},\xi s_{jj}^2) + \sum_{m=1}^M\pi_m \tilde{f}_m(\hat{\beta}_j|\bs{z},\xi),\label{eq:Yj.integrateB.t}
\end{align}
where
\begin{align}
\tilde{f}_m(\hat{\beta}_j|\bs{z},\xi) =
\frac{T_{\nu}((\hat{\beta}_j - \hat{\bs{\alpha}}_j^{\intercal}\bs{z} - a_m)/(\xi^{1/2} s_{jj})) - T_{\nu}((\hat{\beta}_j - \hat{\bs{\alpha}}_j^{\intercal}\bs{z}- b_m)/(\xi^{1/2} s_{jj}))}{b_m - a_m}
\end{align}
where $T_{\nu}$ is the cdf of a standard $t_{\nu}$ distribution. For ease of notation, we will also let $\tilde{f}_0(\hat{\beta}_j|\bs{z}, \xi) := t_{\nu}(\hat{\beta}_j|\hat{\bs{\alpha}}_j^{\intercal}\bs{z},\xi s_{jj}^2)$.

To maximize the marginal likelihood \eqref{eq:Y.integrateB.t}, or
rather the log-likelihood,
\begin{align}
\label{eq:t.unif.ll}
\sum_{j = 1}^p\log p(\hat{\beta}_j|\bs{z}, \bs{\pi}, \xi),
\end{align}
we implemented a coordinate ascent algorithm to iteratively update
$\bs{z}$, $\bs{\pi}$, and $\xi$. To update $\bs{\pi}$ conditional on
$\bs{z}$ and $\xi$, we apply the same convex optimization procedure
described in \citet{stephens2016false} using the \texttt{ashr} package
\citep{stephens2016ashr}. To update $\xi$ given $\bs{\pi}$ and
$\bs{z}$, we use a standard univariate optimizer,
Brent's method \citep{brent1971algorithm}.

To update $\bs{z}$ given
$\bs{\pi}$ and $\xi$, we calculated the gradient of
\eqref{eq:t.unif.ll} with respect to $\bs{z}$:
\begin{align}
\label{eq:t.unif.grad}
\sum_{j = 1}^p \hat{\bs{\alpha}}_j \frac{\sum_{m = 0}^M\pi_m\bar{f}_{m}(\hat{\beta}_j|\bs{z})}{p(\hat{\beta}_j|\bs{z}, \bs{\pi}, \xi)},
\end{align}
where
\begin{align}
\bar{f}_{0}(\hat{\beta}_j|\bs{z}) &= \frac{(\nu + 1)(\hat{\beta}_j - \hat{\bs{\alpha}}_j^{\intercal}\bs{z})}{\nu \xi s_{jj}^2 + (\hat{\beta}_j - \hat{\bs{\alpha}}_j^{\intercal}\bs{z})^2}t_{\nu}(\hat{\beta}_j|\hat{\bs{\alpha}}_j^{\intercal}\bs{z}, \xi s_{jj}^2), \text{ and}\\
\bar{f}_{m}(\hat{\beta}_j|\bs{z}) &= \left(\frac{1}{b_m - a_m}\right) \left(t_{\nu}(\hat{\beta}_j|\hat{\bs{\alpha}}_j^{\intercal}\bs{z} - b_m, \xi s_{jj}^2) - t_{\nu}(\hat{\beta}_j|\hat{\bs{\alpha}}_j^{\intercal}\bs{z} - a_m, \xi s_{jj}^2)\right).
\end{align}
We then use a quasi-Newton approach to maximize
\eqref{eq:t.unif.ll} over $\bs{z}$ using
\eqref{eq:t.unif.grad} (specifically we used the BFGS method).

\subsection{Identifiability}
\label{section:identifiability}
\begin{theorem}
\label{theorem:row.space}
For all non-singular $\bs{A} \in \mathbb{R}^{q \times q}$, we have
that
\begin{align*}
\hat{g} = \argmax_{g \in \mathcal{U}}\max_{\bs{z}\in \mathbb{R}^q} p(\hat{\vbeta}|g,\bs{z},\hat{\bs{\alpha}},\bs{S}) = \argmax_{g\in\mathcal{U}}\max_{\bs{z} \in \mathbb{R}^q}p(\hat{\vbeta}|g,\bs{z},\bs{A}\hat{\bs{\alpha}},\bs{S}).
\end{align*}
\end{theorem}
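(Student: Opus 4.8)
The plan is to show that the objective function $\max_{\bs{z}} p(\hat{\vbeta}|g,\bs{z},\hat{\bs{\alpha}},\bs{S})$ is literally unchanged when $\hat{\bs{\alpha}}$ is replaced by $\bs{A}\hat{\bs{\alpha}}$, for every fixed $g \in \mathcal{U}$; since the inner maximization produces the same value of the objective as a function of $g$, the outer $\argmax$ over $g$ must coincide. So the whole theorem reduces to a statement about the inner optimization over $\bs{z}$.

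First I would write out the marginal likelihood explicitly. From \eqref{eq:mouth.opt} (or its penalized/variance-inflated analogues), for fixed $g$ and $\bs{z}$,
\begin{align*}
p(\hat{\vbeta}|g,\bs{z},\hat{\bs{\alpha}},\bs{S}) = \prod_{j=1}^p \int N(\hat{\beta}_j \,|\, \beta_j + \hat{\bs{\alpha}}_j^{\intercal}\bs{z},\, \shat_j^2)\, g(d\beta_j),
\end{align*}
and the key observation is that $\bs{z}$ enters only through the vector of shifts $(\hat{\bs{\alpha}}_1^{\intercal}\bs{z},\ldots,\hat{\bs{\alpha}}_p^{\intercal}\bs{z})^{\intercal} = \hat{\bs{\alpha}}^{\intercal}\bs{z}$. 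Thus $p(\hat{\vbeta}|g,\bs{z},\hat{\bs{\alpha}},\bs{S})$ depends on $(\bs{z},\hat{\bs{\alpha}})$ only through the single $p$-vector $\bs{v} := \hat{\bs{\alpha}}^{\intercal}\bs{z}$, i.e.\ there is a function $\Phi_g$ with $p(\hat{\vbeta}|g,\bs{z},\hat{\bs{\alpha}},\bs{S}) = \Phi_g(\hat{\bs{\alpha}}^{\intercal}\bs{z})$. Therefore
\begin{align*}
\max_{\bs{z}\in\mathbb{R}^q} p(\hat{\vbeta}|g,\bs{z},\hat{\bs{\alpha}},\bs{S}) = \max_{\bs{v}\,\in\,\mathrm{rowspace}(\hat{\bs{\alpha}})} \Phi_g(\bs{v}),
\end{align*}
where $\mathrm{rowspace}(\hat{\bs{\alpha}}) = \{\hat{\bs{\alpha}}^{\intercal}\bs{z} : \bs{z}\in\mathbb{R}^q\}$ is the column space of $\hat{\bs{\alpha}}^{\intercal}$, equivalently the row space of $\hat{\bs{\alpha}}$.

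Next I would invoke the elementary linear-algebra fact that for any non-singular $\bs{A}\in\mathbb{R}^{q\times q}$ the matrices $\hat{\bs{\alpha}}$ and $\bs{A}\hat{\bs{\alpha}}$ have the same row space: $\{(\bs{A}\hat{\bs{\alpha}})^{\intercal}\bs{z} : \bs{z}\in\mathbb{R}^q\} = \{\hat{\bs{\alpha}}^{\intercal}(\bs{A}^{\intercal}\bs{z}) : \bs{z}\in\mathbb{R}^q\} = \{\hat{\bs{\alpha}}^{\intercal}\bs{w} : \bs{w}\in\mathbb{R}^q\}$, using that $\bs{z}\mapsto\bs{A}^{\intercal}\bs{z}$ is a bijection of $\mathbb{R}^q$. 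Consequently $\max_{\bs{z}} p(\hat{\vbeta}|g,\bs{z},\hat{\bs{\alpha}},\bs{S})$ and $\max_{\bs{z}} p(\hat{\vbeta}|g,\bs{z},\bs{A}\hat{\bs{\alpha}},\bs{S})$ are maxima of the \emph{same} function $\Phi_g$ over the \emph{same} set, hence equal, for each $g$. Taking $\argmax_{g\in\mathcal{U}}$ on both sides gives the claim. I would also remark that the identical argument goes through verbatim for the variance-inflated objective \eqref{eq:mouth.opt.var} (replace $\shat_j^2$ by $\xi\shat_j^2$ and maximize jointly over $(\bs{z},\xi)$, since $\xi$ and $\bs{z}$ still interact with $\hat{\bs{\alpha}}$ only through $\hat{\bs{\alpha}}^{\intercal}\bs{z}$) and for the penalized version, since the penalty $h(\bs{\pi}|\bs{\lambda})$ does not involve $\hat{\bs{\alpha}}$ at all.

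There is essentially no hard step here: the content is entirely the ``$\bs{z}$ enters only through $\hat{\bs{\alpha}}^{\intercal}\bs{z}$'' reparametrization plus invariance of row space under left multiplication by an invertible matrix. The only point requiring a little care is making sure the inner $\max$ over $\bs{z}$ is genuinely attained (or, if not, phrasing everything in terms of suprema so the argument is still valid) and, for the finite-mixture approximation actually used in practice, checking that the same reduction holds there --- but that is immediate since the approximation just replaces the integral $\int N(\cdot|\beta_j+\hat{\bs{\alpha}}_j^{\intercal}\bs{z},\shat_j^2)g(d\beta_j)$ by the finite mixture in \eqref{eq:Yj.integrateB}, which again depends on $(\bs{z},\hat{\bs{\alpha}})$ only via $\hat{\bs{\alpha}}^{\intercal}\bs{z}$. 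One could alternatively phrase the whole proof as: the change of variables $\bs{z}\mapsto\bs{A}^{-\intercal}\bs{z}$ is a bijection of $\mathbb{R}^q$ under which $p(\hat{\vbeta}|g,\bs{z},\bs{A}\hat{\bs{\alpha}},\bs{S}) = p(\hat{\vbeta}|g,\bs{A}^{-\intercal}\bs{z},\hat{\bs{\alpha}},\bs{S})$ is an identity in $\bs{z}$, which makes the equality of the $\max$'s (and hence of the $\argmax$'s over $g$) transparent.
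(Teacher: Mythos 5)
Your proposal is correct and is essentially the paper's own argument: the paper's proof is exactly your closing ``alternative phrasing,'' namely the identity $(\bs{A}\hat{\bs{\alpha}})^{\intercal}\bs{z} = \hat{\bs{\alpha}}^{\intercal}(\bs{A}^{\intercal}\bs{z})$ combined with the fact that $\bs{z}\mapsto\bs{A}^{\intercal}\bs{z}$ is a bijection of $\mathbb{R}^q$, so the inner maximization is over the same set. Your row-space reformulation is the same fact stated geometrically, and your remarks on the variance-inflated and penalized variants are sensible but not needed for the statement as given.
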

\begin{proof}
\begin{align}
\label{eq:swap.A}\argmax_{g\in\mathcal{U}}\max_{\bs{z} \in \mathbb{R}^q}p(\hat{\vbeta}|g,\bs{z},\bs{A}\hat{\bs{\alpha}},\bs{S}) &= \argmax_{g\in\mathcal{U}}\max_{\bs{z} \in \mathbb{R}^q}p(\hat{\vbeta}|g,\bs{A}^{\intercal}\bs{z},\hat{\bs{\alpha}},\bs{S})\\
\label{eq:opt.ZA}&= \argmax_{g\in\mathcal{U}}\max_{\bs{A}^{\intercal}\bs{z} \in \mathbb{R}^q}p(\hat{\vbeta}|g,\bs{A}^{\intercal}\bs{z},\hat{\bs{\alpha}},\bs{S})\\
\label{eq:relabel.AZ}&= \argmax_{g\in\mathcal{U}}\max_{\bs{z} \in \mathbb{R}^q}p(\hat{\vbeta}|g,\bs{z},\hat{\bs{\alpha}},\bs{S}),
\end{align}
where \eqref{eq:swap.A} follows because
$(\bs{A}\hat{\bs{\alpha}})^{\intercal}\bs{z} =
\hat{\bs{\alpha}}^{\intercal}(\bs{A}^{\intercal}\bs{z})$,
\eqref{eq:opt.ZA} follows because optimizing over $\bs{z}$ is the same
as optimizing over $\bs{A}^{\intercal}\bs{z}$ for any non-singular
$\bs{A}$, and \eqref{eq:relabel.AZ} follows from relabeling
$\bs{A}^{\intercal}\bs{z}$ to be $\bs{z}$.
\end{proof}

\subsection{Mouthwash, additional Bells and Whistles}
\label{section:additional.bells}

Here we describe additional features we have implemented in
MOUTHWASH (see also Section \ref{section:mouthwash}).

\subsubsection{Effects that depend on standard errors}

\citet{stephens2016false} modified \eqref{eq:beta.prior.spec} to allow
the $\beta_j$'s to depend on the standard errors of the
$\bhat_j$'s. This may make sense, for example, in gene expression
studies if genes with higher variability tend to have larger
effects. Specifically, \citet{stephens2016false} set
\begin{align}
\label{eq:prior.gamma}
\frac{\beta_j}{s_j^{\gamma}}|s_j \overset{iid}{\sim} g,
\end{align}
where $\gamma \geq 0$ is specified. Estimating $g$ under
\eqref{eq:prior.gamma} is straightforward except when both
$\gamma = 1$ and we include the variance inflation parameter $\xi$
from \eqref{eq:mouth.opt.var}. Under these conditions $g$ and $\xi$
become non-identifiable.

To see this, consider the simple case with no unwanted variation
($\bs{z}=\bs{0}$), and write the normal term from
\eqref{eq:mouth.opt.var} as
\begin{align}
\hat{\beta}_j / s_j \overset{d}{=} \beta_j / s_j + e_j, \text{ where } e_j \overset{iid}{\sim} N(0, \xi).
\end{align}
So effectively $\beta_j/s_j + e_j$ are now {\it iid} observations from
a convolution of a distribution $g$ that is unimodal at 0 with a
$N(0,\xi)$ distribution. This convolution is itself unimodal, and ---
whatever the true value of $\xi$ --- could be fit perfectly using
$\xi=0$ and $g$ equal to the true $g$ convolved with the true
$N(0,\xi)$. Thus it is impossible to guarantee accurate estimation of
$\xi$ without making additional assumptions.

Although it is impossible to guarantee accurate estimation of $\xi$,
it {\it is} possible to guarantee conservative (over-)estimates of
$\xi$. This is formalized in the following lemma:
\begin{lemma}
  \label{lemma:deconvolution}
  For any distribution function, say $F$, unimodal about 0, there
  exists a maximal $\xi$ such that $F$ can be deconvolved into a
  $N(0, \xi)$ distribution function and another distribution function
  $G$ that is \emph{also unimodal} about 0. That is, making $\xi$ any
  larger would result in a non-unimodal $G$.
\end{lemma}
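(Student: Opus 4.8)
The plan is to recast the statement as a compactness claim. Set
\[
\Xi \;:=\; \bigl\{\, \xi \ge 0 \;:\; F = N(0,\xi)\ast G \text{ for some distribution function } G \text{ unimodal about } 0 \,\bigr\},
\]
where $N(0,0)$ is read as the point mass $\delta_0$, so that $0\in\Xi$ (witnessed by $G=F$) and $\Xi\neq\emptyset$. I would then show that $\Xi$ is bounded above and closed; granting that, $\xi_{\max}:=\max\Xi$ exists, and by construction no larger $\xi$ admits a deconvolution of $F$ into $N(0,\xi)$ and a distribution unimodal about $0$, which is the assertion. The two tools I would lean on are: (i) the characteristic function of $N(0,\xi)$, namely $t\mapsto e^{-\xi t^2/2}$, never vanishes, so for each $\xi$ the deconvolution $G$ --- if it exists --- is unique, with characteristic function $\hat F(t)\,e^{\xi t^2/2}$; and (ii) L\'evy's continuity theorem.

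Boundedness is elementary. If $\xi\in\Xi$ with witness $G$, then $|\hat F(t)|=|\hat G(t)|\,e^{-\xi t^2/2}\le e^{-\xi t^2/2}$ for every $t$. Were $\Xi$ unbounded, choosing $\xi_n\in\Xi$ with $\xi_n\to\infty$ would force $\hat F(t)=0$ for all $t\neq 0$, contradicting continuity of $\hat F$ together with $\hat F(0)=1$. Hence $\xi^\star:=\sup\Xi<\infty$, and since $\Xi\neq\emptyset$ there is a sequence in $\Xi$ converging to $\xi^\star$.

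Closedness is the crux. Given $\xi_n\in\Xi$ with $\xi_n\to\xi_\infty$ and witnesses $G_n$, we have $\hat G_n(t)=\hat F(t)\,e^{\xi_n t^2/2}\to\hat F(t)\,e^{\xi_\infty t^2/2}=:\psi(t)$ pointwise, with $\psi$ continuous and $\psi(0)=1$; L\'evy's theorem then yields a probability distribution $G_\infty$ with characteristic function $\psi$ and $G_n\Rightarrow G_\infty$ weakly. It remains to check $G_\infty$ is unimodal about $0$: since a distribution lies in $\mathcal{U}$ exactly when its CDF is convex on $(-\infty,0)$ and concave on $(0,\infty)$ (which automatically permits only an atom at $0$), and these convexity/concavity properties pass to pointwise limits at continuity points --- hence to weak limits --- $G_\infty\in\mathcal{U}$, i.e.\ $\xi_\infty\in\Xi$. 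Applying this to a sequence in $\Xi$ tending to $\xi^\star$ gives $\xi^\star\in\Xi$, so $\xi_{\max}=\xi^\star$.

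I expect the only genuine obstacle to be the closedness step, and within it the verification that the limiting $G_\infty$ is an honest probability distribution rather than a merely formal Fourier-domain deconvolution --- this is precisely what continuity of $\psi$ at the origin, via L\'evy's theorem, rescues --- together with the standard but not completely trivial fact that unimodality about a fixed point survives weak limits. Non-emptiness and boundedness are routine. An alternative would be to work entirely with densities, using the representation of a law in $\mathcal{U}$ as a mixture of the uniforms $U[\cdot\,|-a,0]$ and $U[\cdot\,|0,a]$ together with a mass at $0$ (as already exploited in the optimization), but tracking these mixtures through a Gaussian deconvolution seems more cumbersome than the characteristic-function argument above.
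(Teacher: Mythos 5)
Your proof is correct, but it takes a genuinely different route from the paper's. The paper works entirely inside Khinchin's representation theorem: it writes the characteristic function of $F$ as $f(t)=\frac{1}{t}\int_0^t h(u)\,du$, deconvolves in the Fourier domain, integrates by parts to exhibit the representing kernel $q(u)$ that $G$ would need in order to be unimodal, and then shows that $|q(u)|$ exceeds $1$ for $\xi$ large enough, violating the elementary bound $|q|\le 1$ on characteristic functions; hence sufficiently large $\xi$ are infeasible. Your argument instead packages the claim as nonemptiness, boundedness, and closedness of the feasible set $\Xi$. This buys you something the paper's proof does not explicitly deliver: attainment of the supremum. The paper only establishes an upper bound on the feasible $\xi$ (every large $\xi$ fails) and leaves the existence of a genuine maximizer implicit, whereas your closedness step --- L\'evy continuity applied to $\hat{G}_n(t)=\hat{F}(t)e^{\xi_n t^2/2}$, plus the weak closedness of the class of distributions unimodal about $0$ --- actually produces $\xi_{\max}=\max\Xi$. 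Your boundedness step is also cleaner than the paper's, since $|\hat{G}|\le 1$ forces $|\hat{F}(t)|\le e^{-\xi t^2/2}$ directly, with no appeal to unimodality of $G$ at all (for large $\xi$ there is no deconvolving \emph{distribution}, unimodal or not). The one place where you are leaning on a fact that deserves either a citation or a short argument is the weak closedness of $\mathcal{U}$; it is classical (e.g.\ via Khinchin's representation of a unimodal law as $UZ$ with $U$ uniform on $[0,1]$ independent of $Z$, noting that tightness of $UZ_n$ forces tightness of $Z_n$, or via the convex/concave CDF characterization you sketch), but it is the only nontrivial ingredient in the closedness step and should be pinned down rather than asserted.
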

See Appendix \ref{section:lemma.proof} for proof.

Over-estimating $\xi$ is conservative in that it will over-shrink
estimates of $\beta$ and over-estimate FDR. Motivated by Lemma 1 we
can achieve this conservative behavior by introducing a small penalty
term to encourage $\xi$ to be as big as possible.  Specifically we
maximize the penalized likelihood:
\begin{align}
p(\hat{\vbeta}|\bs{\beta}, \bs{z}, \bs{S}, \xi)f(\xi|\lambda_{\xi}).
\end{align}
where
\begin{align}
f(\xi|\lambda_{\xi}) = \exp\{-\lambda_{\xi} / \xi\},
\end{align}
and $\lambda_{\xi} > 0$ is a penalty parameter that can be (in
principle) arbitrarily small.  Because $f(\xi|\lambda_{\xi})$ is
increasing, the introduction of this term promotes $\xi$ to be as
large as possible with $g$ unimodal.

\subsubsection{Generalizing normal likelihood to $t$ likelihood}

For small sample sizes the normality assumption in
\eqref{eq:simplified.model} might be better replaced with a $t$
distribution:
\begin{align}
\label{eq:t.likelihood}
p(\hat{\vbeta}|\bs{\beta}, \bs{z}, \bs{S}) = \prod_{j = 1}^p t_{\nu}(\hat{\beta}_j|\beta_j + \hat{\bs{\alpha}}_j^{\intercal}\bs{z}, s_{jj}^2),
\end{align}
where $t_{\nu}(\cdot|a, b^2)$ denotes the density of a (generalized)
$t$-distribution with degrees
of freedom $\nu$, location parameter $a$, and scale parameter
$b > 0$. That is,
\begin{align}
\label{eq:t.density}
t_{\nu}(\hat{\beta} | a, b^2) = \frac{\Gamma\left(\frac{\nu + 1}{2}\right)}{\Gamma\left(\frac{\nu}{2}\right)\sqrt{\pi\nu b^2}}\left(1 + \frac{(\hat{\beta} - a)^2}{\nu b^2}\right)^{-\frac{\nu + 1}{2}},
\end{align}
where $\Gamma(\cdot)$ is the gamma function.  A similar generalization
was implemented in \citep{stephens2016false}. This replacement of a
normal likelihood with a $t$ does not greatly complicate computations
when the mixture components in \eqref{eq:prior.mixture} are uniform,
and we have implemented this case (Appendix
\ref{section:em.uniform}). The normal case is more complex and not
implemented.

\subsubsection{Penalty on $\pi_0$ to promote conservative behavior}

\citet{stephens2016false} included the option of incorporating a
penalty on the mixing proportions to promote conservative (over-)
estimation of $\pi_0$. We also include this option here. Specifically
we allow a penalty of the form
\begin{align}
  \label{eq:penalty.lambda}
  h(\bs{\pi}|\bs{\lambda}) = \prod_{m=0}^M\pi_m^{\lambda_m - 1},
\end{align}
and maximize the penalized likelihood
\begin{align}
p(\hat{\vbeta}|\bs{\beta}, \bs{z}, \bs{S})h(\bs{\pi}|\bs{\lambda}),
\end{align}
where $p(\hat{\vbeta}|\bs{\beta}, \bs{z},
\bs{S})$ is defined in either
\eqref{eq:simplified.model} or \eqref{eq:t.likelihood}.
We use the same default value for $\bs{\lambda}$ as
\citet{stephens2016false}:~$\lambda_0 = 10$ and
$\lambda_i = 1$ for $i = 1,\ldots, m$. This encourages
conservative (over-) estimation of $\pi_0$, which is often considered
desirable in FDR contexts.

\subsubsection{Reducing computation for large $p$}

MOUTHWASH is computationally practical for typical gene-expression
studies, where $p\approx$ 20,000 genes.  However, in contexts where
$p$ exceeds 100,000 \citep[e.g.\ ChIP-seq,][]{ward2018silencing} the
run time can become inconvenient.  To reduce run-time in such cases we
suggest estimating $\bs{z}$ from \eqref{eq:simplified.model} using a
random subset of variables. As $\bs{z}$ typically contains at most a
few dozen parameters, a modest-sized subset should provide reasonable
estimates.

Specifically, we implemented the following speed-up strategy for $p$
very large.  First estimate $g,\bs{z}$ using a random subset of
variables.  Second, fixing the estimate of $\bs{z}$ from the first
step, re-estimate $g$ by maximum likelihood over all $p$ variables
(which is a convex optimization problem that can be solved efficiently
even for very large $p$).

\subsection{BACKWASH}
\label{section:backwash}

MOUTHWASH maximizes over $\bs{z}$ in \eqref{eq:mouth.opt}.  We now
describe an alternative that aims to better allow for uncertainty in
$\bs{z}$ by placing a prior $p(\bs{z})$ on $\bs{z}$ and integrating
out $\bs{z}$ when optimizing over $g$. Because of the introduction of
a prior distribution on $\bs{z}$ we call this approach BACKWASH for
\textbf{B}ayesian \textbf{A}djustment for \textbf{C}onfounding
\textbf{K}nitted \textbf{W}ith \textbf{A}daptive \textbf{SH}rinkage.
Specifically, BACKWASH replaces Step 3 of MOUTHWASH with:
\begin{enumerate}[noitemsep, nolistsep]
\item[3.] Estimate $g$ by:
  \begin{align}
    \hat{g} &:= \argmax_{g \in \mathcal{U}} p(\bhat |g, \hat{\bs{\alpha}}, \bs{S})\\
&= \argmax_{g \in \mathcal{U}} \prod_{j = 1}^p\int_{\beta_j}\int_{\bs{z}}N(\bhat_j |\beta_j + \hat{\bs{\alpha}}^{\intercal}\bs{z}, s_{jj}^2)g(\beta_j)p(\bs{z})\dif\bs{z}\dif\beta_j.
  \end{align}
  \end{enumerate}

  To specify the prior $p(\bs{z})$, we require that inference depends
  on $\hat{\bs{\alpha}}$ only through its rowspace (see Section
  \ref{section:mouthwash}).  A prior that satisfies this
  requirement is the so-called ``$g$-prior''
  \citep{zellner1986assessing, liang2008mixtures}:
\begin{align}
\label{eq:zg.prior}
\bs{z} \, | \, \hat{\bs{\alpha}} \sim N_q(\bs{0}, \phi^2(\hat{\bs{\alpha}}\hat{\bs{\alpha}}^{\intercal})^{-1}),
\end{align}
where $\phi\in\mathbb{R}^{+}$ is a hyperparameter that we estimate by
maximum marginal likelihood. With this prior the marginal likelihood
is
\begin{align}
\label{eq:marginal.out.z}
\int_{\bs{\beta}}N_p(\vbhat |\bs{\beta}, \bs{S} + \phi^2\hat{\bs{\alpha}}^{\intercal}(\hat{\bs{\alpha}}\hat{\bs{\alpha}}^{\intercal})^{-1}\hat{\bs{\alpha}})\prod_{j = 1}^pg(\beta_j)\dif\beta_j,
\end{align}
which depends on $\hat{\bs{\alpha}}$ only through its rowspace.

When we include the estimation of the hyperparameter $\phi$, and a
variance scaling parameter $\xi\in\mathbb{R}^+$ (Section
\ref{section:var.inflate}) the full BACKWASH Step 3 becomes:
\begin{enumerate}[noitemsep, nolistsep]
\item[3.] Let
  \begin{align}
    (\hat{g}, \hat{\phi}, \hat{\xi}) &:= \argmax_{(g, \phi, \xi)\ \in\ \mathcal{U} \times \mathbb{R}^+ \times \mathbb{R}^+} p(\vbhat |g,\phi, \xi, \hat{\bs{\alpha}}, \bs{S})\\
\label{eq:xi.phi.g.max}&= \argmax_{(g, \phi, \xi)\ \in\ \mathcal{U} \times \mathbb{R}^+ \times \mathbb{R}^+}\int_{\bs{\beta}}N_p(\vbhat |\bs{\beta}, \xi\bs{S} + \phi^2\hat{\bs{\alpha}}^{\intercal}(\hat{\bs{\alpha}}\hat{\bs{\alpha}}^{\intercal})^{-1}\hat{\bs{\alpha}})\prod_{j = 1}^pg(\beta_j)\dif\beta_j.
  \end{align}
\end{enumerate}

Maximizing \eqref{eq:xi.phi.g.max} is difficult, and so we resort to a
variational approximation \citep{blei2017variational} and instead
maximize a lower bound for the marginal likelihood over $g$, $\phi$,
and $\xi$ (see Appendix \ref{section:vem.backwash} for details).

\subsection{Variational EM Algorithm for BACKWASH}
\label{section:vem.backwash}

In this section, we present the Variational Expectation Maximization
(VEM) algorithm that we developed for the BACKWASH procedure in
Section \ref{section:backwash}. For a good introduction to variational
methods, see \citet{bishop2006pattern}. The model in Section
\ref{section:backwash} is
\begin{align}
\label{eq:y.like.back}[\hat{\vbeta}|\bs{\beta}, \phi, \xi] &\sim N_p(\bs{\beta}, \xi\bs{S} + \phi^2\hat{\bs{\alpha}}^{\intercal}(\hat{\bs{\alpha}}\hat{\bs{\alpha}}^{\intercal})^{-1}\hat{\bs{\alpha}})\\
\label{eq:beta.prior.back}\beta_j &\text{ i.i.d. s.t. } p(\beta_j) = \sum_{m = 0}^M\pi_mN(\beta_j|0, \tau_m^2),
\end{align}
where the $\tau_m$'s are known. Let
\begin{align}
\bs{A} := \hat{\bs{\alpha}}^{\intercal}(\hat{\bs{\alpha}}\hat{\bs{\alpha}}^{\intercal})^{-1/2} \in \mathbb{R}^{p \times q}.
\end{align}
We augment model
\eqref{eq:y.like.back}-\eqref{eq:beta.prior.back} with a
standard Gaussian vector $\bs{v} \in \mathbb{R}^{q}$ and
1-of-$M$ binary vectors $\bs{w}_j \in \mathbb{R}^{M}$, $j =
1,\ldots, p$. Then
\eqref{eq:y.like.back}-\eqref{eq:beta.prior.back} may be
equivalently represented by
\begin{align}
\hat{\vbeta} &\overset{d}{=} \bs{\beta} + \phi\bs{A}\bs{v} + \bs{e}\\
\bs{v} &\sim N_q(\bs{0}, \bs{I}_q)\\
\bs{e} &\sim N_p(\bs{0}, \xi\bs{S})\\
p(\beta_j, \bs{w}_j) &= \prod_{m = 0}^M\left[\pi_mN(\beta_j|0, \tau_m^2)\right]^{w_{jm}}.
\end{align}

Our variational approach will be to maximize over
$(f, \bs{\pi}, \phi, \xi)$ the following lower-bound of the
log-marginal likelihood
\begin{align}
\label{eq:elbo}\log p(\hat{\vbeta}|\bs{\pi}, \phi, \xi) \geq \int f(\bs{\beta},\bs{W},\bs{v}) \log\left(\frac{p(\hat{\vbeta}, \bs{\beta}, \bs{W}, \bs{v}|\bs{\pi}, \phi, \xi)}{f(\bs{\beta},\bs{W},\bs{v})}\right)\dif \bs{\beta}\dif\bs{W}\dif\bs{v},
\end{align}
where $f$ an element of some constrained class of densities and
\begin{align}
p(\hat{\vbeta}, \bs{\beta}, \bs{W}, \bs{v}|\bs{\pi}, \phi, \xi) = p(\hat{\vbeta}| \bs{\beta}, \bs{v}, \phi, \xi) p(\bs{\beta}, \bs{W}|\bs{\pi}) p(\bs{v}).
\end{align}
We perform mean-field variational inference and constrain $f$ to be factorized by
\begin{align}
  f(\bs{\beta},\bs{W},\bs{v}) = f(\bs{v})\prod_{j = 1}^p f(\beta_j, \bs{w}_j).
\end{align}
This is the only assumption that we place on the form of the
variational densities. Here, we are indexing the variational densities
by their arguments.  After maximizing \eqref{eq:elbo} over
$(f, \bs{\pi}, \phi, \xi)$, we use the $f(\beta_j, \bs{w}_j)$'s to
provide posterior summaries for the $\beta_j$'s.

The variational updates for all parameters involved are presented in
Algorithm \ref{algorithm:vem}. As the derivations are standard and
tedious, we place the details in Appendix
\ref{section:supp.vem}, though we make a few comments here. First, the
variational density of $\bs{v}$ is a multivariate normal which we
parameterize with mean $\bs{\mu}_{\bs{v}}$ and covariance
$\bs{\Sigma}_{\bs{v}}$. The variational densities of the $\beta_j$'s
turn out to be mixtures of Gaussians which we parameterize with mixing
means $\mu_{jm}$, mixing variances $\sigma_{jm}$, and mixing
proportions $\gamma_{jm}$.  Importantly, if the prior on the
$\beta_j$'s contains a $\tau_m$ that is $0$, representing a pointmass
at 0, then the variational densities of the $\beta_j$'s also must have
a pointmass at 0. This allows us to return local false discovery
rates. The $\lambda_m$'s in Algorithm \ref{algorithm:vem} are the same
penalties as in Section \ref{section:bells.whistles}. Finally, we do
not need to initialize all parameters. It turns out that it suffices
to initialize the variational means of the $\beta_j$'s, the mean of
$\bs{v}$, the prior mixing proportions $\bs{\pi}$, the ``$g$''
hyperparameter $\phi$, and the variance scaling parameter $\xi$. We
initialize the means of the $\beta_j$'s, denoted
$\bs{\mu}_{\bs{\beta}}$, by the posterior means from fitting \ash{} to
$(\hat{\vbeta}, \bs{S})$ assuming no confounding, and we initialize
$\bs{\mu}_{\bs{v}}$ by regressing the resulting residuals on
$\bs{A}$. It intuitively makes sense to initialize $\xi$ at 1 as this
simply indicates that one has adequate variance estimates $\bs{S}$
obtained during the FA step. The choice of initialization of $\phi$ is
not so clear, but we choose a default of 1. Finally, we use the same
initialization of the $\pi_m$'s as ASH.

\begin{algorithm}
\begin{algorithmic}
\STATE Initialize parameters:
\begin{description}[noitemsep, nolistsep]
\item Initialize $\bs{\mu}_{\bs{\beta}}$ by the posterior means from
  fitting \ash{} on $(\hat{\vbeta}, \bs{S})$.
\item Initialize $\bs{\mu}_{\bs{v}} = (\bs{A}^{\intercal}\bs{S}^{-1}\bs{A})^{-1}\bs{A}^{\intercal}\bs{S}^{-1}(\hat{\vbeta} - \bs{\mu}_{\bs{\beta}})$.
\item Initialize $\xi = 1$.
\item Initialize $\phi = 1$.
\end{description}
\REPEAT
\STATE Set $\bs{r} = \hat{\vbeta} - \phi\bs{A}\bs{\mu}_{\bs{v}}$.
\FOR{$j = 1,\ldots,p$}
\FOR{$m = 0, \ldots, M$}
\STATE Set $\sigma_{jm}^2 = \left(\frac{1}{\tau_m^2} + \frac{1}{\xi s_{jj}^2}\right)^{-1}$.
\STATE Set $\mu_{jm} = r_j\sigma_{jm}^2 / (\xi s_{jj}^2)$.
\STATE Set $\gamma_{jm} = \frac{\pi_m N(r_j | 0, \xi s_{jj}^2 + \tau_m^2)}{\sum_{m = 0}^M\pi_m N(r_j | 0, \xi s_{jj}^2 + \tau_m^2)}$.
\ENDFOR
\STATE Set $\bs{\mu}_{\bs{\beta}j} = \sum_{m = 0}^M \gamma_{jm}\mu_{jm}$.
\ENDFOR
\FOR{$m = 0,\ldots, M$}
\STATE Set $\pi_m = \frac{\sum_{j = 1}^p \gamma_{jm} + \lambda_{m} - 1}{\sum_{m = 0}^M\sum_{j = 1}^p \gamma_{jm} + \sum_{m = 0}^M\lambda_{m} - (M + 1)}$.
\ENDFOR
\STATE Set $\bs{\Sigma}_{\bs{v}} = \left(\frac{\phi^2}{\xi}\bs{A}^{\intercal}\bs{S}^{-1}\bs{A} + \bs{I}_{q}\right)^{-1}$.
\STATE Set $\bs{\mu}_{\bs{v}} = \frac{\phi}{\xi} \bs{\Sigma}_{\bs{v}}\bs{A}^{\intercal}\bs{S}^{-1}(\hat{\vbeta} - \bs{\mu}_{\bs{\beta}})$.
\STATE Set $\phi = \frac{\bs{\mu}_{\bs{v}}^{\intercal}\bs{A}^{\intercal}\bs{S}^{-1}(\hat{\vbeta} - \bs{\mu}_{\bs{\beta}})}{\bs{\mu}_{\bs{v}}^{\intercal}\bs{A}^{\intercal}\bs{S}^{-1}\bs{A}\bs{\mu}_{\bs{v}} + \tr(\bs{A}^{\intercal}\bs{S}^{-1}\bs{A}\bs{\Sigma}_{\bs{v}})}$.
\STATE Set
\begin{align}
\begin{split}
\xi =& \frac{1}{p}\Bigg\{\hat{\vbeta}^{\intercal}\bs{S}^{-1}\hat{\vbeta} + \sum_{j = 1}^P\frac{1}{s_{jj}^2}\sum_{m = 0}^M \gamma_{jm}(\mu_{jm}^2 + \sigma_{jm}^2) + \phi^2 \tr\left(\bs{A}^{\intercal}\bs{S}^{-1}\bs{A}(\bs{\mu}_{\bs{v}}\bs{\mu}_{\bs{v}}^{\intercal} + \bs{\Sigma}_{\bs{v}})\right)\\
&- 2 \hat{\vbeta}^{\intercal}\bs{S}^{-1}\bs{\mu}_{\bs{\beta}} - 2\phi\hat{\vbeta}^{\intercal}\bs{S}^{-1}\bs{A}\bs{\mu}_{\bs{v}} + 2\phi \bs{\mu}_{\bs{\beta}}^{\intercal}\bs{S}^{-1}\bs{A}\bs{\mu}_{\bs{v}}\Bigg\}.
\end{split}
\end{align}
\STATE Calculate the penalized ELBO \eqref{eq:elbo.cond}.
\UNTIL{Convergence}
\end{algorithmic}
\caption{Variational Expectation Maximization algorithm to fit BACKWASH.}
\label{algorithm:vem}
\end{algorithm}

\subsection{Simulation details}
\label{section:signal.details}

We describe here how we simulated the data in Section
\ref{section:simulations}. The procedure is the same as in
\citet{gerard2017unifying}.

First, we took the top $p$ expressed genes from the GTEx RNA-seq data
\citep{gtex2015} and randomly sampled $n$ individuals, yielding an
$n \times p$ count matrix $\bs{Z}$. We then randomly assigned $n/2$
samples to one group and other $n/2$ samples to a second group.  At
this point, all gene expression levels are theoretically unassociated
with the group label as group assignment was done independently of any
gene expression. We used this as one scenario in our simulations
(where $\pi_0 = 1$)

We then added signal to a proportion $(1-\pi_0)$ of genes, randomly
chosen from the set of genes represented in the null data, as
follows. First, we sampled the effect sizes from a $N(0, 0.8^2)$, the
variance being chosen as to make the AUC of all methods neither too
close to 1 nor too close to 0.5. For $j_{\ell} \in \Omega$, the set of
non-null genes, let
\begin{align}
a_{j_1},\ldots,a_{j_{(1-\pi_0) p}} \overset{iid}{\sim} N(0, 0.8^2),
\end{align}
be the effect sizes. For each $j_{\ell}\in\Omega$, we then drew new
counts $w_{ij_{\ell}}$ from $z_{ij_{\ell}}$ by
\begin{align}
w_{ij_{\ell}}|z_{ij_{\ell}} \sim
\begin{cases}
\text{Binomial}(z_{ij_{\ell}}, 2^{a_{j_{\ell}}x_{i2}}) &\text{if } a_{j_{\ell}} < 0 \text{ and } j_{\ell} \in \Omega,\\
\text{Binomial}(z_{ij_{\ell}}, 2^{-a_{j_{\ell}}(1 - x_{i2})}) &\text{if } a_{j_{\ell}} > 0  \text{ and } j_{\ell} \in \Omega\\
\delta(z_{ij_{\ell}}) &\text{if } j_{\ell} \notin \Omega,
\end{cases}
\end{align}
Here, $\delta(a)$ is notation for a point-mass at $a$. We then used
$\bs{W}$ as our new response matrix of counts. To obtain the $\bs{Y}$
in \eqref{eq:full.model}, we simply took a $\log_2$ transformation of
the elements of $\bs{W}$.

The intuition behind this approach is that if the original counts
$z_{ij}$ are Poisson distributed, then the new counts $w_{ij}$ are
also Poisson distributed with $a_j$ being the approximate
$\log_2$-effect between groups. That is, if
$z_{ij} \sim Poisson(\lambda_j)$, then
\begin{align}
[w_{ij} | a_{j}, a_{j} < 0, j \in \Omega] &\sim \text{Poisson}(2^{a_{j}x_{i2}}\lambda_j)\\
[w_{ij} | a_{j}, a_{j} > 0, j \in \Omega] &\sim \text{Poisson}(2^{-a_{j}(1 - x_{i2})}\lambda_j).
\end{align}
Hence,
\begin{align}
E[\log_2(w_{ij}) - \log_2(w_{kj}) | a_{j},\ a_{j} < 0,\ j \in \Omega] &\approx a_{j}x_{i2} - a_{j}x_{k2} = a_j(x_{i2} - x_{k2}), \text{ and}\\
E[\log_2(w_{ij}) - \log_2(w_{kj}) | a_{j},\ a_{j} > 0,\ j \in \Omega] &\approx -a_{j}(1 - x_{i2}) + a_{j}(1 - x_{k2}) = a_j(x_{i2} - x_{k2}).
\end{align}

See also \citet{kvam2012comparison, reeb2013evaluating,
  soneson2013comparison, vandewiel2014shrinkbayes, rocke2015excess}
for similar simulation settings.

\subsection{Analysis using the control genes of \citet{lin2017housekeeping}}
\label{section:lin}

We repeated the analysis of the GTEx data in Section
\ref{section:real.data} using the list of control genes collated by
\citet{lin2017housekeeping}. This list was created using single cell
sequencing data and contains only moderate overlap with the list
developed by \citet{eisenberg2013human}. We observe:
\begin{enumerate}
\item The lfdr estimates for the control gene methods are mostly
  similar when using the two different lists. Compare Figures
  \ref{figure:rank.lfdr} and \ref{figure:rank.lfdr.lin}. Also compare
  Figures \ref{figure:lfdr.full} and \ref{figure:lfdr.full.lin}.
\item The estimates of the proportion of genes that are null are also
  mostly similar when using the two lists. Compare Tables
  \ref{table:pi0hat} and \ref{table:pi0hat.lin}.
\item RUV2 methods improved slightly in the positive control analysis
  when using the list from \citet{lin2017housekeeping}. Compare
  Figures \ref{figure:prop.max} and
  \ref{figure:prop.max.lin}. However, again, most of the methods
  performed similarly in ranking the most significant genes.
\end{enumerate}

The comparable performance of control gene methods when using the
lists of \citet{lin2017housekeeping} and \citet{eisenberg2013human}
does not indicate that these lists are of comparable quality. Recall
that MOUTHWASH and BACKWASH both indicate that the vast majority of
genes are null. Thus, it might be that many lists of ``control genes''
would give similar performance, because the vast majority of these
``control genes'' would indeed by null.

\subsection{$t$-likelihood Variance Inflated CATE}
\label{section:catencv}

\begin{algorithm}
  \caption{EM Algorithm for fitting a regression with $t$-errors}
  \label{algorithm:tem}
  \begin{algorithmic}[1]
    \STATE E-step:~Set
    \begin{align}
      w_{j} = \frac{\nu_j + 1}{(\hat{\beta}_{\mathcal{C}j} - \hat{\bs{\alpha}}_{\mathcal{C}j}^{\intercal}\bs{z}_{(old)})^2 / (\xi_{(old)} s_{\mathcal{C}j}^2) + \nu_j}
    \end{align}
    \STATE M-step:~Let $\bs{W} := \diag(w_{1},\ldots,w_{m})$. Set
    \begin{align}
      \bs{z}_{(new)} &= (\hat{\bs{\alpha}}_{\mathcal{C}}
      \bs{W}\bs{S}_{\mathcal{C}}^{-1}
      \hat{\bs{\alpha}}_{\mathcal{C}}^{\intercal})^{-1}\hat{\bs{\alpha}}_{\mathcal{C}}\bs{W}\bs{S}_{\mathcal{C}}^{-1}\hat{\vbeta}_{\mathcal{C}} \\
      \xi_{(new)} &= \frac{1}{m}\sum_{j = 1}^m \frac{w_{j}}{\sigma_j^2}(\hat{\beta}_{j} - \hat{\bs{\alpha}}_{\mathcal{C}j}^{\intercal}\bs{z}_{(new)})^2
    \end{align}
  \end{algorithmic}
\end{algorithm}

To improve robustness to modeling assumptions, we explored modifying
CATE to use a $t$-likelihood in its second step. This is akin to the
ideas presented in Section \ref{section:bells.whistles}. We replace
\eqref{eq:reduced.model.control} with
\begin{align}
  \label{eq:control.model.var.t}
  [\hat{\beta}_{\mathcal{C}j}|\hat{\bs{\alpha}}_{\mathcal{C}j}^{\intercal}, \bs{z}, \xi, s_{\mathcal{C}j}^2]  \overset{ind}{\sim} t_{\nu_j}(\hat{\bs{\alpha}}_{\mathcal{C}j}^{\intercal}\bs{z}, \xi s_{\mathcal{C}j}^2),
\end{align}
where $t_{\nu_j}(\cdot|a, b^2)$ is as defined in \eqref{eq:t.density}
and $\hat{\bs{\alpha}}_{\mathcal{C}j}$ is the $j$th column of
$\hat{\bs{\alpha}}_{\mathcal{C}}$. The degrees of freedom
($\nu_j$'s) are assumed known. CATE uses
\eqref{eq:reduced.model.control} to estimate $\bs{z}$ by maximum
likelihood. Hence, we use \eqref{eq:control.model.var.t} to estimate
$\bs{z}$ and $\xi$ by maximum likelihood. To do so, we apply an
expectation-maximization (EM) algorithm that is similar to that
discussed in Appendix A.2 of \citet{lange1989robust}. The model
\eqref{eq:control.model.var.t} can be represented by including a
latent variable $\tau_{j}$ for each observation
\begin{align}
  \hat{\beta}_{\mathcal{C}j}|\tau_{j} \sim N(\hat{\bs{\alpha}}_{\mathcal{C}j}^{\intercal}\bs{z}, \tau_j\xi s_{\mathcal{C}j}^2), \ \tau_{j} \sim \text{ Inverse-Gamma}(\nu_j/2, \nu_j/2), \label{eq:latent.t}
\end{align}
Using \eqref{eq:latent.t}, an EM algorithm to fit this model is easily
obtained. One step of this algorithm is presented in Algorithm
\ref{algorithm:tem}. Repeated applications of the step in Algorithm
\ref{algorithm:tem} is guaranteed to increase the likelihood at each
iteration, converging to a local maximum.

\subsection{Derivation of VEM Algorithm}
\label{section:supp.vem}
Here, we derive the updates for the variational EM algorithm presented
in Section \ref{section:vem.backwash}. We begin by writing out all
densities involved:
\begin{align}
&p(\hat{\vbeta}, \bs{\beta}, \bs{W}, \bs{v}|\bs{\pi}, \xi, \phi) = p(\hat{\vbeta}|\bs{\beta}, \bs{v}, \xi, \phi) p(\bs{\beta}, \bs{W}|\bs{\pi}) p(\bs{v}),\\
&p(\hat{\vbeta}|\bs{\beta}, \bs{v}, \xi, \phi) = (2\pi)^{-p/2} \xi^{-p/2} \det(\bs{S})^{-1/2} \exp\left(-\frac{1}{2\xi}(\hat{\vbeta} - \bs{\beta} - \phi\bs{A}\bs{v})^{\intercal}\bs{S}^{-1}(\hat{\vbeta} - \bs{\beta} - \phi\bs{A}\bs{v})\right),\\
&p(\bs{\beta}, \bs{W}|\bs{\pi}) = \prod_{j = 1}^p\prod_{m = 0}^M \left\{\pi_m (2\pi\tau_m^2)^{-1/2}\exp\left(-\frac{1}{2\tau_m^2}\beta_j^2\right)\right\}^{w_{jm}},\\
&p(\bs{v}) = (2\pi)^{-q/2}\exp\left(-\frac{1}{2}\bs{v}^{\intercal}\bs{v}\right),\\
&h(\bs{\pi}) = \prod_{m = 0}^M\pi_m^{\lambda_m - 1}.
\end{align}

\paragraph{Update of $f(\beta_j, \bs{w}_j)$:} By a general result in
mean-field variational inference \citep[see][for
example]{bishop2006pattern} we update $f(\beta_j, \bs{w}_j)$ by
\begin{align}
\label{eq:first.betaw}
\log f(\beta_j, \bs{w}_j) \propto E_{-(\beta_j, \bs{w}_j)}\left[\log p(\hat{\vbeta}, \bs{\beta}, \bs{W}, \bs{v}|\bs{\pi}, \xi, \phi)\right],
\end{align}
where ``$\propto$'' here denotes that the relationship holds up to an
additive constant that does not depend on $(\beta_j, \bs{w}_j)$, and
$E_{-(\beta_j, \bs{w}_j)}[\cdot]$ denotes that we take the expectation
with respect to all variational densities except that of
$(\beta_j, \bs{w}_j)$. Let $\bs{r} := \hat{\vbeta} -
\phi\bs{A}E[\bs{v}]$. Then we have
\begin{align}
\eqref{eq:first.betaw} &\propto E_{-(\beta_j, \bs{w}_j)}\left[\log p(\hat{\vbeta}|\bs{\beta}, \bs{v}, \xi, \phi) + \log p(\beta_j, \bs{w}_j|\bs{\pi})\right]\\
&\propto E_{-(\beta_j, \bs{w}_j)}\left[-\frac{1}{2\xi}(\hat{\vbeta} - \bs{\beta} - \phi\bs{A}\bs{v})^{\intercal}\bs{S}^{-1}(\hat{\vbeta} - \bs{\beta} - \phi\bs{A}\bs{v})\right] + \log p(\beta_j, \bs{w}_j|\bs{\pi})\\
&\propto -\frac{1}{2\xi s_{jj}}\left(\beta_j^2 - 2\beta_jr_j\right) +\log p(\beta_j, \bs{w}_j|\bs{\pi})\\
&\propto \log\left(N(r_j|\beta_j, \xi s_{jj}^2)\right) + \log p(\beta_j, \bs{w}_j|\bs{\pi})\\
&\propto \log\left(N(r_j|\beta_j, \xi s_{jj}^2)\right) + \sum_{m = 0}^M w_{jm}\log\left(\pi_mN(\beta_j|0, \tau_m^2)\right)\\
&\propto \sum_{m = 0}^M w_{jm}\log\left(\pi_mN(r_j|\beta_j, \xi s_{jj}^2)N(\beta_j|0, \tau_m^2)\right)\\
\label{eq:bayes.argument}&\propto \sum_{m = 0}^M w_{jm}\log\left(\pi_mN(r_j|0, \xi s_{jj}^2 + \tau_m^2)N(\beta_j|\mu_{jm}, \sigma_{jm}^2)\right),
\end{align}
where
\begin{align}
\sigma_{jm}^2 &:= \left(\frac{1}{\tau_m^2} + \frac{1}{\xi s_{jj}^2}\right)^{-1}, \text{ and}\\
\mu_{jm} &:= r_j \sigma_{jm}^2 / (\xi s_{jj}^2).
\end{align}
Equation \eqref{eq:bayes.argument} follows by standard Bayesian
conjugacy arguments. Equation \eqref{eq:bayes.argument} is the
log-kernel of a density of a mixture of normals with mixing means
$\mu_{jm}$ for $m = 0,\ldots,M$ and mixing variances $\sigma_{jm}^2$
for $m = 0,\ldots,M$. The mixing weights are proportional to
$\pi_mN(r_j|0, \xi s_{jj}^2 + \tau_m^2)$. Since the mixing weights
must sum to unity we have that they are
\begin{align}
\gamma_{jm} := \frac{\pi_mN(r_j|0, \xi s_{jj}^2 + \tau_m^2)}{\sum_{m = 0}^M \pi_mN(r_j|0, \xi s_{jj}^2 + \tau_m^2)}.
\end{align}

\paragraph{Update $f(\bs{v})$:}
Again, using a standard argument from mean-field variational
inference, we update the variational density of $\bs{v}$ with
\begin{align}
\log f(\bs{v}) &\propto E_{-\bs{v}}\left[\log p(\hat{\vbeta}, \bs{\beta}, \bs{W}, \bs{v}|\bs{\pi}, \xi, \phi)\right]\\
&\propto E_{-\bs{v}}\left[\log p(\hat{\vbeta}|\bs{\beta}, \bs{v}, \xi, \phi) + \log p(\bs{v})\right]\\
&\propto E\left[-\frac{1}{2\xi}(\hat{\vbeta} - \bs{\beta} - \phi\bs{A}\bs{v})^{\intercal}\bs{S}^{-1}(\hat{\vbeta} - \bs{\beta} - \phi\bs{A}\bs{v})\right] - \frac{1}{2}\bs{v}^{\intercal}\bs{v}\\
&\propto -\frac{1}{2}\left(\frac{\phi^2}{\xi}\bs{v}^{\intercal}\bs{A}^{\intercal}\bs{S}^{-1}\bs{A}\bs{v} - 2 \frac{\phi}{\xi}\bs{v}^{\intercal}\bs{A}^{\intercal}\bs{S}(\hat{\vbeta} - E[\bs{\beta}])\right) - \frac{1}{2}\bs{v}^{\intercal}\bs{v}\\
\label{eq:kernel.v}&\propto -\frac{1}{2}\left[\bs{v}^{\intercal}\left(\frac{\phi^2}{\xi}\bs{A}^{\intercal}\bs{S}^{-1}\bs{A} + \bs{I}_{q}\right)\bs{v} - 2 \frac{\phi}{\xi}\bs{v}^{\intercal}\bs{A}^{\intercal}\bs{S}(\hat{\vbeta} - E[\bs{\beta}])\right].
\end{align}
Equation \eqref{eq:kernel.v} is the log-kernel of a multivariate
normal density with covariance matrix $\bs{\Sigma}_{\bs{v}}$ and mean
$\bs{\mu}_{\bs{v}}$, where
\begin{align}
\bs{\Sigma}_{\bs{v}} &:= \left(\frac{\phi^2}{\xi}\bs{A}^{\intercal}\bs{S}^{-1}\bs{A} + \bs{I}_{q}\right)^{-1},\text{ and}\\
\bs{\mu}_{\bs{v}} &:= \frac{\phi}{\xi}\bs{\Sigma}_{\bs{v}}\bs{A}^{\intercal}\bs{S}^{-1}(\hat{\vbeta} - E[\bs{\beta}]).
\end{align}

\paragraph{Update $\phi$:}
We update $\phi$ by finding
\begin{align}
\phi^{(new)} &= \argmax_{\phi} E\left[\log p(\hat{\vbeta}, \bs{\beta}, \bs{W}, \bs{v}|\bs{\pi}, \xi, \phi)\right]\\
&= \argmax_{\phi} E\left[\log p(\hat{\vbeta}|\bs{\beta}, \bs{v}, \xi, \phi)\right]\\
&= \argmax_{\phi} E\left[-\frac{1}{2\xi}(\hat{\vbeta} - \bs{\beta} - \phi\bs{A}\bs{v})^{\intercal}\bs{S}^{-1}(\hat{\vbeta} - \bs{\beta} - \phi\bs{A}\bs{v})\right]\\
&= \argmin_{\phi} \left\{\phi^2\tr\left(\bs{A}^{\intercal}\bs{S}^{-1}\bs{A}E[\bs{v}\bs{v}^{\intercal}]\right) - 2 \phi E[\bs{v}]^{\intercal}\bs{A}^{\intercal}\bs{S}^{-1}(\hat{\vbeta} - E[\bs{\beta}])\right\}\\
&= \frac{E[\bs{v}]^{\intercal}\bs{A}^{\intercal}\bs{S}^{-1}(\hat{\vbeta} - E[\bs{\beta}])}{\tr\left(\bs{A}^{\intercal}\bs{S}^{-1}\bs{A}E[\bs{v}\bs{v}^{\intercal}]\right)}.
\end{align}

\paragraph{Update $\xi$:}
We update $\xi$ by finding
\begin{align}
\xi^{(new)} &= \argmax_{\xi} E\left[\log p(\hat{\vbeta}, \bs{\beta}, \bs{W}, \bs{v}|\bs{\pi}, \xi, \phi)\right]\\
&= \argmax_{\xi} E\left[\log p(\hat{\vbeta}|\bs{\beta}, \bs{v}, \xi, \phi)\right]\\
&= \argmax_{\xi} \left\{-\frac{p}{2}\log(\xi)-\frac{1}{2\xi}E\left[(\hat{\vbeta} - \bs{\beta} - \phi\bs{A}\bs{v})^{\intercal}\bs{S}^{-1}(\hat{\vbeta} - \bs{\beta} - \phi\bs{A}\bs{v})\right]\right\}\\
&= \frac{1}{p}E\left[(\hat{\vbeta} - \bs{\beta} - \phi\bs{A}\bs{v})^{\intercal}\bs{S}^{-1}(\hat{\vbeta} - \bs{\beta} - \phi\bs{A}\bs{v})\right].
\end{align}

\paragraph{Update $\bs{\pi}$:}
Finally, we update $\bs{\pi}$ by
\begin{align}
\bs{\pi}^{(new)} &= \argmax_{\bs{\pi}} E\left[\log p(\hat{\vbeta}, \bs{\beta}, \bs{W}, \bs{v}|\bs{\pi}, \xi, \phi)\right] + \log(h(\bs{\pi}))\\
&= \argmax_{\bs{\pi}} E\left[\log p(\bs{W}|\bs{\pi})\right] + \log(h(\bs{\pi}))\\
&= \argmax_{\bs{\pi}} \sum_{m = 0}^M\left(\sum_{j = 1}^p E[w_{jm}] + \lambda_m - 1\right)\log(\pi_{m}).
\end{align}
Hence, we have
\begin{align}
\pi_{m}^{(new)} = \frac{\sum_{j = 1}^p E[w_{jm}] + \lambda_m - 1}{\sum_{m = 0}^M\left(\sum_{j = 1}^p E[w_{jm}] + \lambda_m - 1\right)}.
\end{align}
All of the expectations in the above updates are tedious to compute
but standard so we omit the details.

For our convergence criterion, we monitor the increase in the
lower-bound of the log-marginal likelihood \eqref{eq:elbo}. It can be
written in closed form as
\begin{align}
\begin{split}
\label{eq:elbo.cond}
\int &f(\bs{\beta},\bs{W},\bs{v}) \log\left(\frac{p(\hat{\vbeta}, \bs{\beta}, \bs{W}, \bs{v}|\bs{\pi}, \phi, \xi)}{f(\bs{\beta},\bs{W},\bs{v})}\right)\dif \bs{\beta}\dif\bs{W}\dif\bs{v}\\
=& -\frac{p}{2}\log(\xi) - \frac{1}{2\xi}\Bigg\{\hat{\vbeta}^{\intercal}\bs{S}^{-1}\hat{\vbeta} + \sum_{j = 1}^P\frac{1}{s_{jj}^2}\sum_{m = 0}^M \gamma_{jm}(\mu_{jm}^2 + \sigma_{jm}^2) + \phi^2 \tr\left(\bs{A}^{\intercal}\bs{S}^{-1}\bs{A}(\bs{\mu}_{\bs{v}}\bs{\mu}_{\bs{v}}^{\intercal} + \bs{\Sigma}_{\bs{v}})\right)\\
&- 2 \hat{\vbeta}^{\intercal}\bs{S}^{-1}\bs{\mu}_{\bs{\beta}} - 2\phi\hat{\vbeta}^{\intercal}\bs{S}^{-1}\bs{A}\bs{\mu}_{\bs{v}} + 2\phi \bs{\mu}_{\bs{\beta}}^{\intercal}\bs{S}^{-1}\bs{A}\bs{\mu}_{\bs{v}}\Bigg\}\\
&+ \sum_{j = 1}^p\left\{\gamma_{j0}\log(\pi_0) + \sum_{m = 1}^M \gamma_{jm}\left(\log(\pi_m) - \frac{1}{2}\log(2\pi) - \frac{1}{2}\log(\tau_m^2) - \frac{1}{2\tau_m^2}(\mu_{jm}^2 + \sigma_{jm}^2)\right)\right\}\\
&- \frac{1}{2}\bs{\mu}_{\bs{v}}^{\intercal}\bs{\mu}_{\bs{v}} - \frac{1}{2}\tr(\bs{\Sigma}_{\bs{v}}) + \sum_{m = 0}^M (\lambda_m - 1)\log(\pi_m)\\
& + \frac{1}{2}\log\det(\bs{\Sigma}_{\bs{v}}) - \sum_{j = 1}^p \left\{\gamma_{j0}\log(\gamma_{j0}) +  \sum_{m = 1}^M \gamma_{jm}\left(\log(\gamma_{jm}) - \frac{1}{2}\log(2\pi) - \frac{1}{2}\log(\sigma_{jm}^2) - \frac{1}{2}\right)\right\}\\
&+ \text{ constant},
\end{split}
\end{align}
where ``constant'' indicates an additive constant that is independent
of all parameters that we are optimizing over.

\subsection{Proof of Lemma \ref{lemma:deconvolution}}
\label{section:lemma.proof}

We make use of the following results from
\citet{lukacs1970characteristic}.
\begin{theorem}[Theorem 2.1.1 from \citet{lukacs1970characteristic}]
\label{theorem:char.prop}
Let $F(x)$ be a distribution function with characteristic function $f(t)$. Then
\begin{enumerate}[noitemsep, nolistsep]
\item $f(0) = 1$,
\item $|f(t)| \leq 1$,
\end{enumerate}
where $|\cdot|$ denotes the modulus.
\end{theorem}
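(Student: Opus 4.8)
The plan is to argue directly from the definition of the characteristic function, $f(t) = \int_{-\infty}^{\infty} e^{itx}\,dF(x)$, which is well defined for every real $t$ precisely because $F$ is a distribution function: the integrand has modulus $1$ everywhere and is integrated against a probability measure of total mass $1$, so the integral converges absolutely. Both conclusions then follow from elementary properties of this integral, and I do not expect any genuine obstacle.

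For the first claim I would substitute $t = 0$. Since $e^{i\cdot 0 \cdot x} = 1$ for all $x$, this gives
\begin{align}
f(0) = \int_{-\infty}^{\infty} e^{i\cdot 0 \cdot x}\,dF(x) = \int_{-\infty}^{\infty} dF(x) = 1,
\end{align}
where the last equality is just the statement that $F$, being a distribution function, assigns total mass $1$ to the real line. For the second claim I would apply the triangle inequality for complex-valued integrals, $\left|\int g\,dF\right| \le \int |g|\,dF$, together with $|e^{itx}| = 1$, to obtain
\begin{align}
|f(t)| = \left|\int_{-\infty}^{\infty} e^{itx}\,dF(x)\right| \le \int_{-\infty}^{\infty} |e^{itx}|\,dF(x) = \int_{-\infty}^{\infty} dF(x) = 1,
\end{align}
and this bound holds uniformly in $t$.

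The only step warranting justification is the modulus--integral inequality for a complex integrand, which I would establish in the standard way: writing $\int g\,dF = re^{i\theta}$ in polar form, one has $r = e^{-i\theta}\int g\,dF = \int \mathrm{Re}\bigl(e^{-i\theta} g\bigr)\,dF \le \int |g|\,dF$, since the real part of a complex number never exceeds its modulus. Invoking this inequality is the entirety of the ``hard part''; everything else is immediate from the normalization of $F$, so the proof is genuinely a two-line argument once the integral representation is in hand.
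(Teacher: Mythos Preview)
Your argument is correct and is the standard textbook proof of these elementary facts. However, the paper does not actually prove this theorem: it simply quotes it from \citet{lukacs1970characteristic} as a tool for the proof of Lemma~\ref{lemma:deconvolution}, so there is no ``paper's own proof'' to compare against. Your derivation from the integral representation $f(t) = \int e^{itx}\,dF(x)$, using the normalization $\int dF = 1$ for part~1 and the modulus--integral inequality for part~2, is exactly what one would find in Lukacs or any probability text, and nothing further is needed.
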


\begin{theorem}[Theorem 4.5.1 from \citet{lukacs1970characteristic}
  due to Aleksandr Yakovlevich Khinchin]
  A distribution function is unimodal with vertex $x = 0$ if, and only
  if, its characteristic function $f(t)$ can be represented as
\begin{align}
\label{eq:uni.char.rep}
f(t) = \frac{1}{t}\int_0^th(u)\dif u\ (-\infty \leq t \leq \infty),
\end{align}
where $h(u)$ is a characteristic function.
\end{theorem}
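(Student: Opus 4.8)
The plan is to establish Khinchin's equivalence by passing through the multiplicative (scale-mixture-of-uniforms) representation of unimodal laws. Concretely, I would show that a distribution function is unimodal with vertex at $0$ if and only if it is the law of $X \overset{d}{=} UY$, where $U \sim \mathrm{Unif}[0,1]$ is independent of a random variable $Y$ whose characteristic function is $h$. The whole theorem then hinges on a single characteristic-function computation: using $\int_0^1 e^{ituy}\dif u = (e^{ity}-1)/(ity)$ together with independence of $U$ and $Y$,
\[
\begin{aligned}
f(t) &= E\left[e^{itUY}\right]
= E_Y\left[\frac{e^{itY}-1}{itY}\right]
= E_Y\left[\frac{1}{t}\int_0^t e^{iuY}\dif u\right]\\
&= \frac{1}{t}\int_0^t E_Y\left[e^{iuY}\right]\dif u
= \frac{1}{t}\int_0^t h(u)\dif u,
\end{aligned}
\]
where the interchange of expectation and integration is Fubini's theorem, legitimate because $|e^{iuY}| = 1$ (cf.\ Theorem \ref{theorem:char.prop}). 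Thus, once the representation $X \overset{d}{=} UY$ is secured, the integral form of the characteristic function is automatic.

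For the ``if'' direction, suppose $f(t) = \frac{1}{t}\int_0^t h(u)\dif u$ with $h$ a characteristic function. I would take $Y$ to be a random variable with characteristic function $h$, let $U \sim \mathrm{Unif}[0,1]$ be independent of $Y$, and set $X := UY$; the identity above then shows $X$ has characteristic function $f$. To see $X$ is unimodal with vertex at $0$, note that conditional on $Y = y \neq 0$ the variable $X$ is uniform on the interval between $0$ and $y$, with density $|y|^{-1}\mathbf{1}\{x\ \text{between}\ 0\ \text{and}\ y\}$; integrating over the law $\mu$ of $Y$, the density of $X$ on $(0,\infty)$ equals $\int_{(x,\infty)} y^{-1}\,\mu(\dif y)$, which is non-increasing in $x$, and symmetrically non-decreasing on $(-\infty,0)$, with an atom at $0$ of mass $\mu(\{0\})$. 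Hence $X$ is unimodal with vertex $0$, and by uniqueness of characteristic functions \citep{lukacs1970characteristic} the distribution with characteristic function $f$ coincides with the law of $X$ and is therefore unimodal.

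For the ``only if'' direction, suppose $F$ is unimodal with vertex at $0$, i.e.\ convex on $(-\infty,0)$ and concave on $(0,\infty)$, so that off $0$ it has a density $p$ non-increasing on $(0,\infty)$ and non-decreasing on $(-\infty,0)$, plus a possible atom at $0$. The crux is to exhibit $Y$ with $X \overset{d}{=} UY$; I would construct the law $\mu$ of $Y$ by integration by parts, setting on the positive half-line
\[
p(x) = \int_x^\infty \frac{1}{y}\,\mu(\dif y) \quad (x>0), \qquad \mu(\dif y) := -y\,\dif p(y),
\]
mirroring this on $(-\infty,0)$, with $\mu(\{0\})$ equal to the atom of $F$ at $0$. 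Monotonicity of $p$ makes $\mu$ nonnegative, the identity $\int_x^\infty y^{-1}(-y\,\dif p(y)) = p(x)$ recovers the density, and the correct total mass follows from $\int_0^\infty (-y\,\dif p(y)) = \int_0^\infty p(y)\,\dif y$ once the boundary term $y\,p(y) \to 0$ is checked; with $h$ the characteristic function of this $Y$, the first-paragraph computation delivers the representation. The main obstacle is precisely this construction --- verifying that $\mu(\dif y) = -y\,\dif p(y)$ defines a genuine probability measure (nonnegativity from monotonicity, total mass via integration by parts with vanishing boundary terms) and correctly accommodating the atom at $0$ and the two-sided support --- since everything downstream is a routine characteristic-function manipulation once this scale-mixture structure is in place.
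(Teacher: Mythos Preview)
The paper does not prove this theorem: it is quoted verbatim from \citet{lukacs1970characteristic} as a tool, together with the elementary characteristic-function bounds of Theorem~\ref{theorem:char.prop}, and is then applied black-box in the proof of Lemma~\ref{lemma:deconvolution}. There is therefore no in-paper argument to compare against.

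That said, your proposal is the standard proof of Khinchin's theorem and is essentially what one finds in Lukacs and elsewhere: reduce unimodality about $0$ to the scale-mixture representation $X \overset{d}{=} UY$ with $U\sim\mathrm{Unif}[0,1]$ independent of $Y$, then read off the integral form of the characteristic function from $E[e^{itUY}] = t^{-1}\int_0^t h(u)\dif u$. Your ``if'' direction is clean. In the ``only if'' direction your construction $\mu(\dif y) = -y\,\dif p(y)$ is the right one, and you have correctly flagged the only genuine work --- checking that the Stieltjes measure $-\dif p$ exists from monotonicity, that the boundary term $y\,p(y)\to 0$ (which follows from integrability of $p$), and that the atom at $0$ is handled. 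One small point worth tightening: unimodality with vertex $0$ does not a priori force $F$ to be absolutely continuous off $0$; the usual route is to argue that convexity/concavity of $F$ on each half-line already yields a monotone one-sided derivative $p$, from which your integration-by-parts construction proceeds. With that clarification the argument is complete.
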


\begin{proof}[Proof of Lemma \ref{lemma:deconvolution}]
  Let $f(t)$ be the characteristic function of $F$ and let $g(t)$ be
  the characteristic function of $G$. Recall that the characteristic
  function of a $N(0,\xi)$ random variable is
\begin{align}
  \label{eq:normal.char}
  k(t) := e^{-\frac{1}{2}\xi t^2}.
\end{align}
Since $F$ is a convolution of $G$ and a $N(0, \xi)$ distribution
function, we have
\begin{align}
f(t) = e^{-\frac{1}{2}\xi t^2}g(t) \Rightarrow g(t) = e^{\frac{1}{2}\xi t^2}f(t).
\end{align}
Since $f(t)$ is unimodal about $0$, we use representation
\eqref{eq:uni.char.rep} and write
\begin{align}
\label{eq:g.n.h.2}
g(t) = e^{\frac{1}{2}\xi t^2}\frac{1}{t}\int_0^th(u)\dif u,
\end{align}
where $h(t)$ is a characteristic function. Using integration by parts,
we can write \eqref{eq:g.n.h.2} as
\begin{align}
\label{eq:g.n.h.3}
g(t) = \frac{1}{t}\int_{0}^{t}\left[\xi u^2e^{\frac{1}{2}\xi u^2} \frac{1}{u}\int_{0}^{u}h(v)\dif v + e^{\frac{1}{2}\xi u^2} h(u)\right]\dif u.
\end{align}
We now show that the integrand in \eqref{eq:g.n.h.3}
is not a characteristic function for sufficiently large $\xi$. Using
\eqref{eq:uni.char.rep} and \eqref{eq:normal.char}, we can
write the integrand in \eqref{eq:g.n.h.3} as
\begin{align}
\label{eq:char.integrand}
\xi u^2 k(u)f(u) + k(u) h(u).
\end{align}
Since
\begin{align}
\label{eq:modulus.integrand}
|\xi u^2 k(u)f(u) + k(u) h(u)| = \xi |u^2 k(u)f(u) + k(u) h(u) / \xi|,
\end{align}
it is now clear that for any fixed non-zero $u$, the limit of
\eqref{eq:modulus.integrand} as $\xi \rightarrow \infty$ is
$\infty$. Thus, for any fixed non-zero $u$, we can make $\xi$ large
enough so that the modulus of \eqref{eq:char.integrand} is larger than
$1$, violating property $2$ of Theorem \ref{theorem:char.prop}. Thus,
for large enough $\xi$, \eqref{eq:char.integrand} is not a
characteristic function.

It remains to note that the integrand in \eqref{eq:g.n.h.3} is
unique up to a set of Lebesgue measure 0. That is, if
\begin{align}
g(t) = \frac{1}{t} \int_{0}^{t} q(u) \dif u\ (-\infty \leq t \leq \infty),
\end{align}
then
\begin{align}
q(u) = \xi u^2e^{\frac{1}{2}\xi u^2} \frac{1}{u}\int_{0}^{u}h(v)\dif v + e^{\frac{1}{2}\xi u^2} h(u),
\end{align}
except on a set of Lebesgue measure zero. Thus,
$q(u) \underset{\xi \rightarrow \infty}{\longrightarrow} \infty$
almost everywhere, and so there is no choice of $q(u)$ that is a
characteristic function for all $\xi$. Hence, for large enough $\xi$,
$G$ is not unimodal.
\end{proof}

\clearpage
\subsection{Supplementary Figures}
\newcounter{sfigure}
\renewcommand{\thefigure}{S\arabic{sfigure}}

\newcounter{stable}
\renewcommand{\thetable}{S\arabic{stable}}

\stepcounter{sfigure}
\begin{figure}[!htb]
\begin{center}
\includegraphics[scale = 0.9]{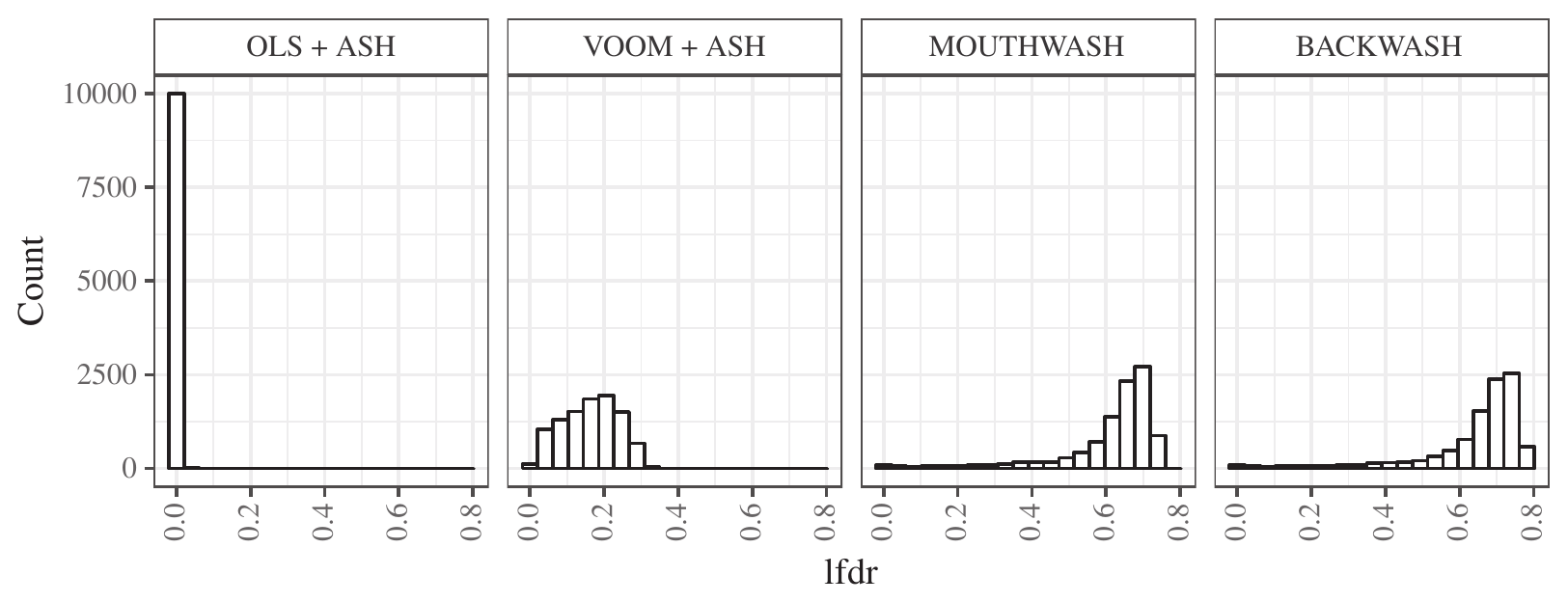}
\end{center}
\caption{Histograms of lfdr for four methods applied to a single
  simulated null dataset.  From left to right:~OLS followed by ASH; a
  voom transformation followed by limma and hierarchical shrinkage of
  variances \protect\citep{law2014voom} followed by ASH; MOUTHWASH;
  and BACKWASH.}
\label{figure:lfdr}
\end{figure}

\stepcounter{sfigure}
\begin{figure}[!htb]
\begin{center}
\includegraphics[scale = 0.9]{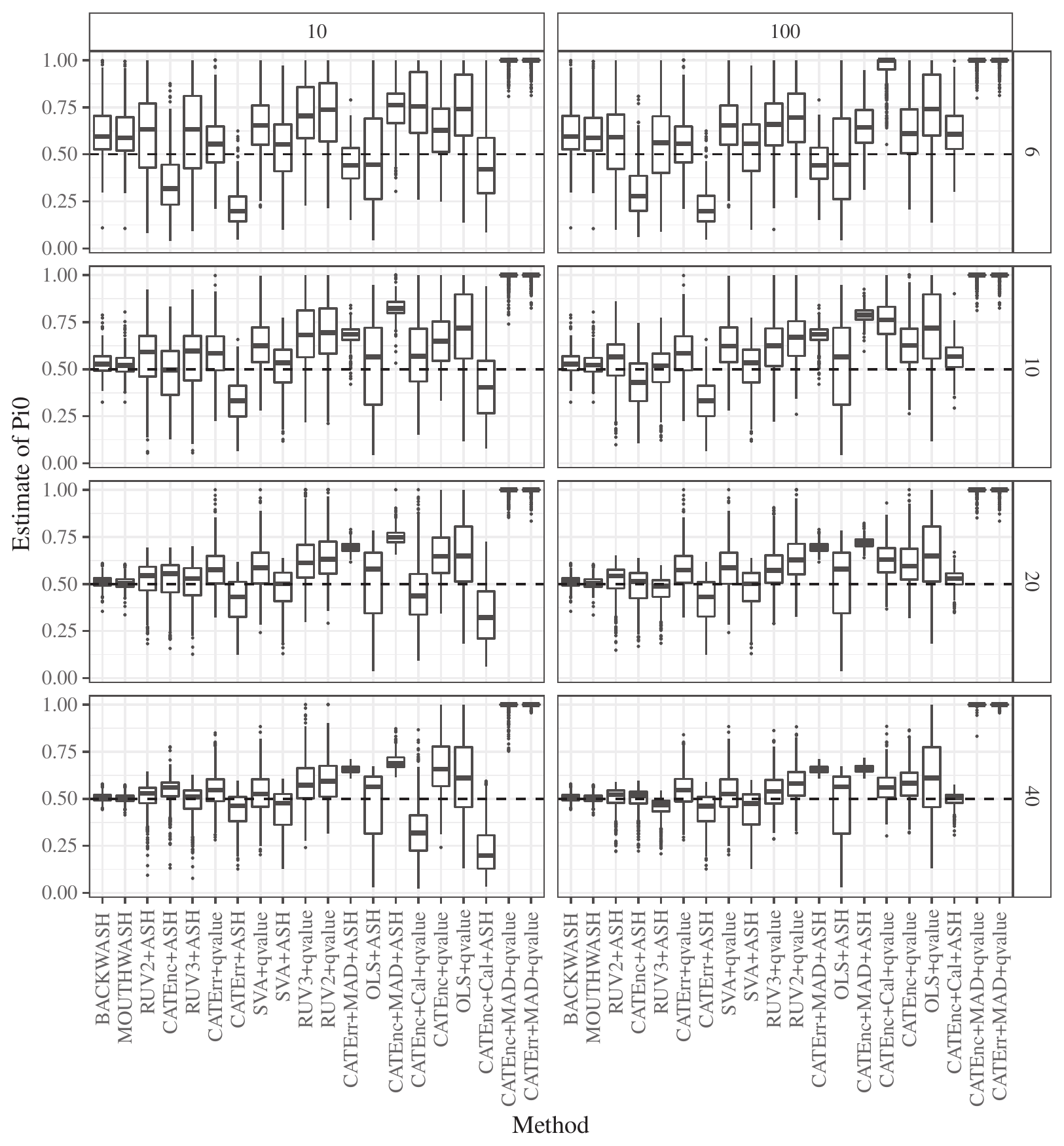}
\end{center}
\caption{Boxplots of estimates of $\pi_0$ for all the methods when
  $\pi_0 = 0.5$. The rows are the sample sizes, the columns are the
  number of control genes used (for methods that use control
  genes). The methods are ordered by the their mean squared error in
  the case when there are 10 control genes and the sample size is
  40. The dashed horizontal line has a $y$-intercept at 0.5}
\label{figure:pi0.5}
\end{figure}

\stepcounter{sfigure}
\begin{figure}[!htb]
\begin{center}
\includegraphics[scale = 0.9]{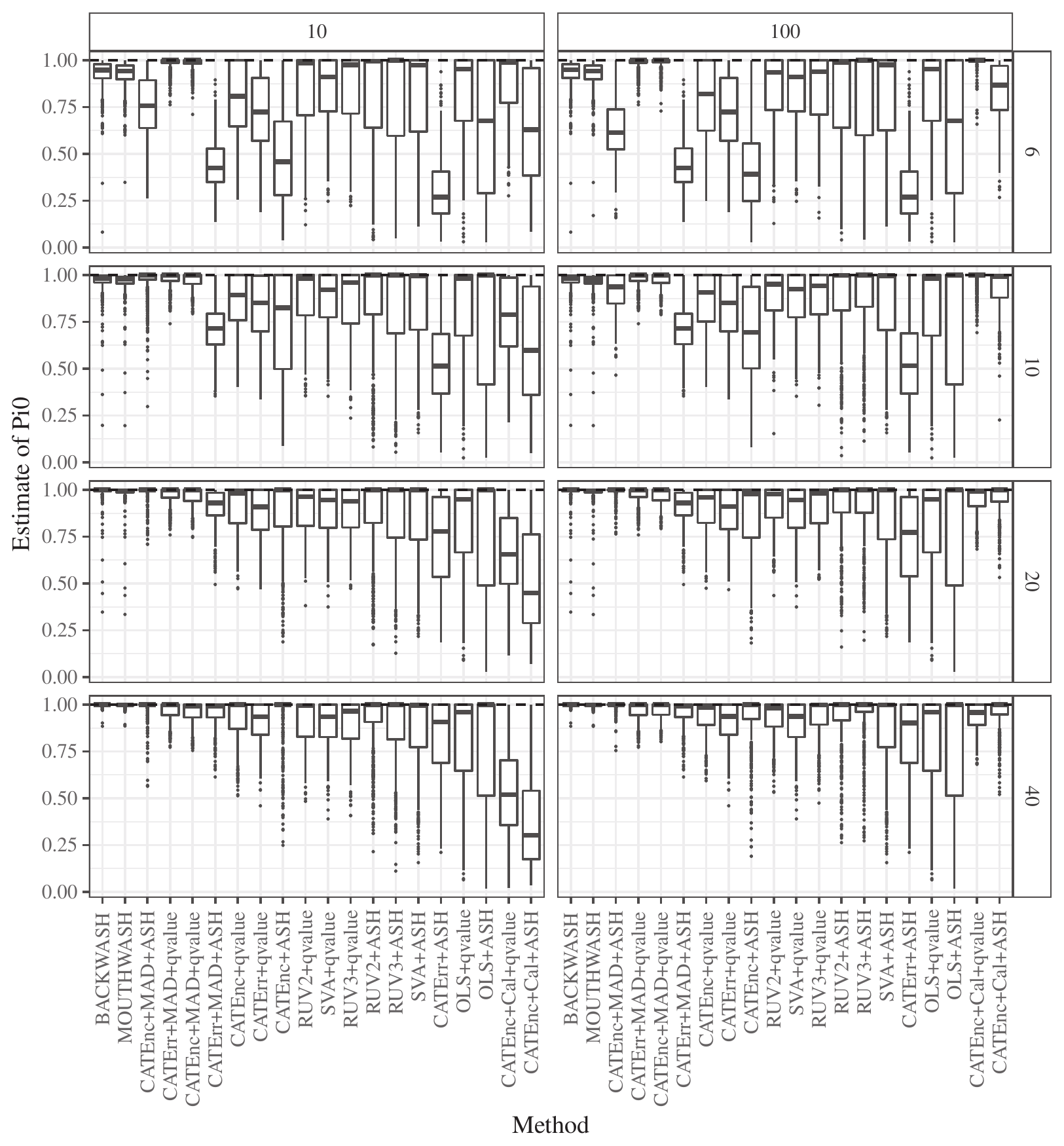}
\end{center}
\caption{Boxplots of estimates of $\pi_0$ for all the methods when
  $\pi_0 = 1$. The rows are the sample sizes, the columns are the
  number of control genes used (for methods that use control
  genes). The methods are ordered by the their mean squared error in
  the case when there are 10 control genes and the sample size is
  40. The dashed horizontal line has a $y$-intercept at 1}
\label{figure:pi0.1}
\end{figure}

\stepcounter{stable}
\begin{table}[ht]
  \centering
  \caption{Computation time, in seconds, of the methods fit in Section
    \ref{section:simulations} when $n = 100$ and $p = 10\text{,}000$. The
    ``Time'' column contains the 0.5, 0.025, and 0.975 quantiles of
    computation time over 100 replicates.}
  \label{tab:comp.time}
\begin{tabular}{ll}
  \hline
  Method    & Time (sec) \\
  \hline
  OLS       & 0.03 (0.03, 0.04) \\
  RUV2      & 0.08 (0.08, 0.11) \\
  CATErr    & 0.18 (0.17, 0.24) \\
  CATEnc    & 0.64 (0.62, 0.88) \\
  RUV3      & 1.13 (1.1, 1.35) \\
  SVA       & 2.09 (2.07, 2.76) \\
  MOUTHWASH & 139.85 (134.57, 150.1) \\
   \hline
\end{tabular}
\end{table}

\stepcounter{sfigure}
\begin{figure}[!htb]
\begin{center}
\includegraphics[scale = 0.9]{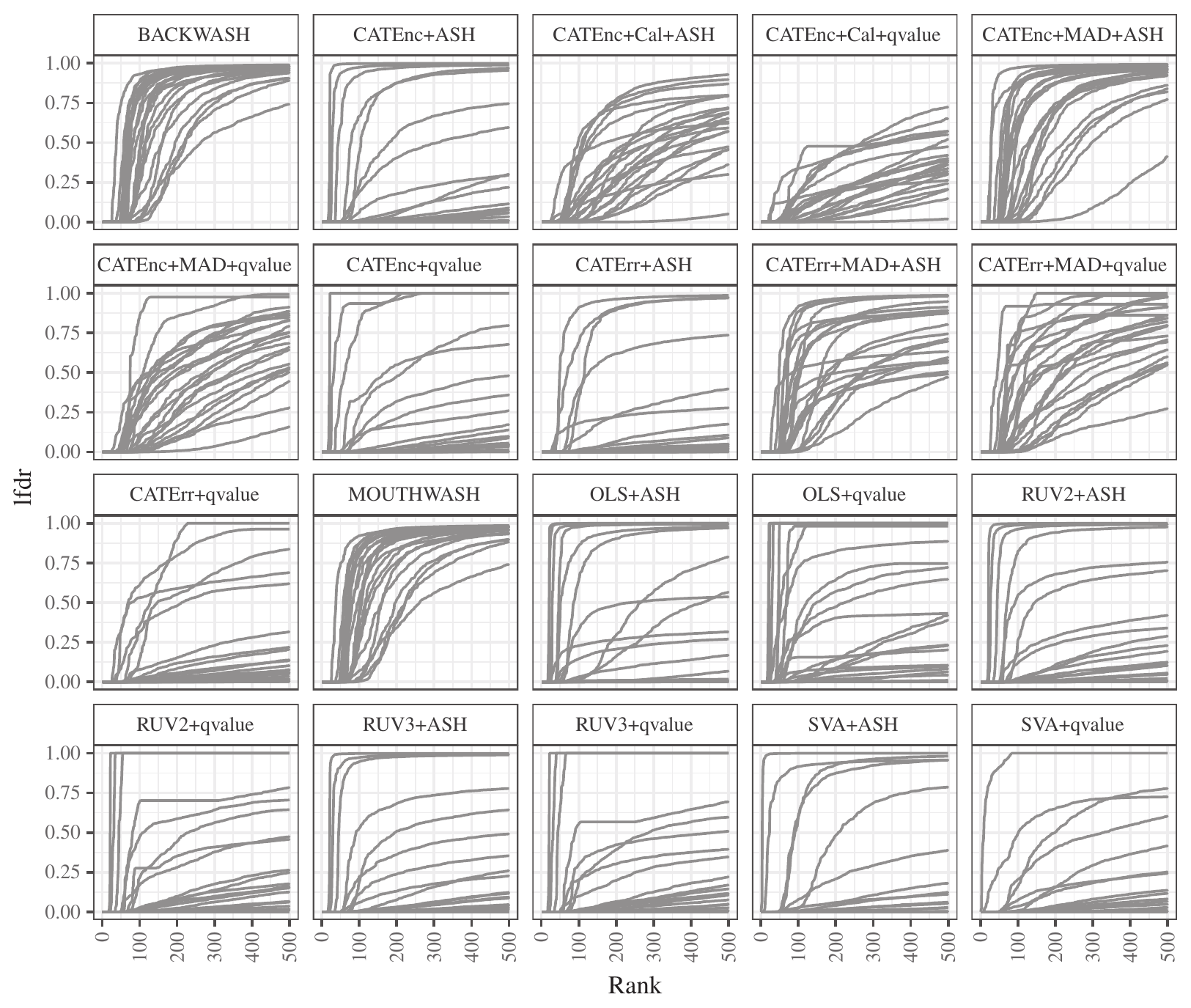}
\end{center}
\caption{Smallest 500 lfdr's versus rank for each method in each
  tissue from the GTEx data. Each facet is a different method and each
  line is a different tissue.}
\label{figure:lfdr.full}
\end{figure}

\stepcounter{sfigure}
\begin{figure}[!htb]
\begin{center}
\includegraphics[scale=0.9]{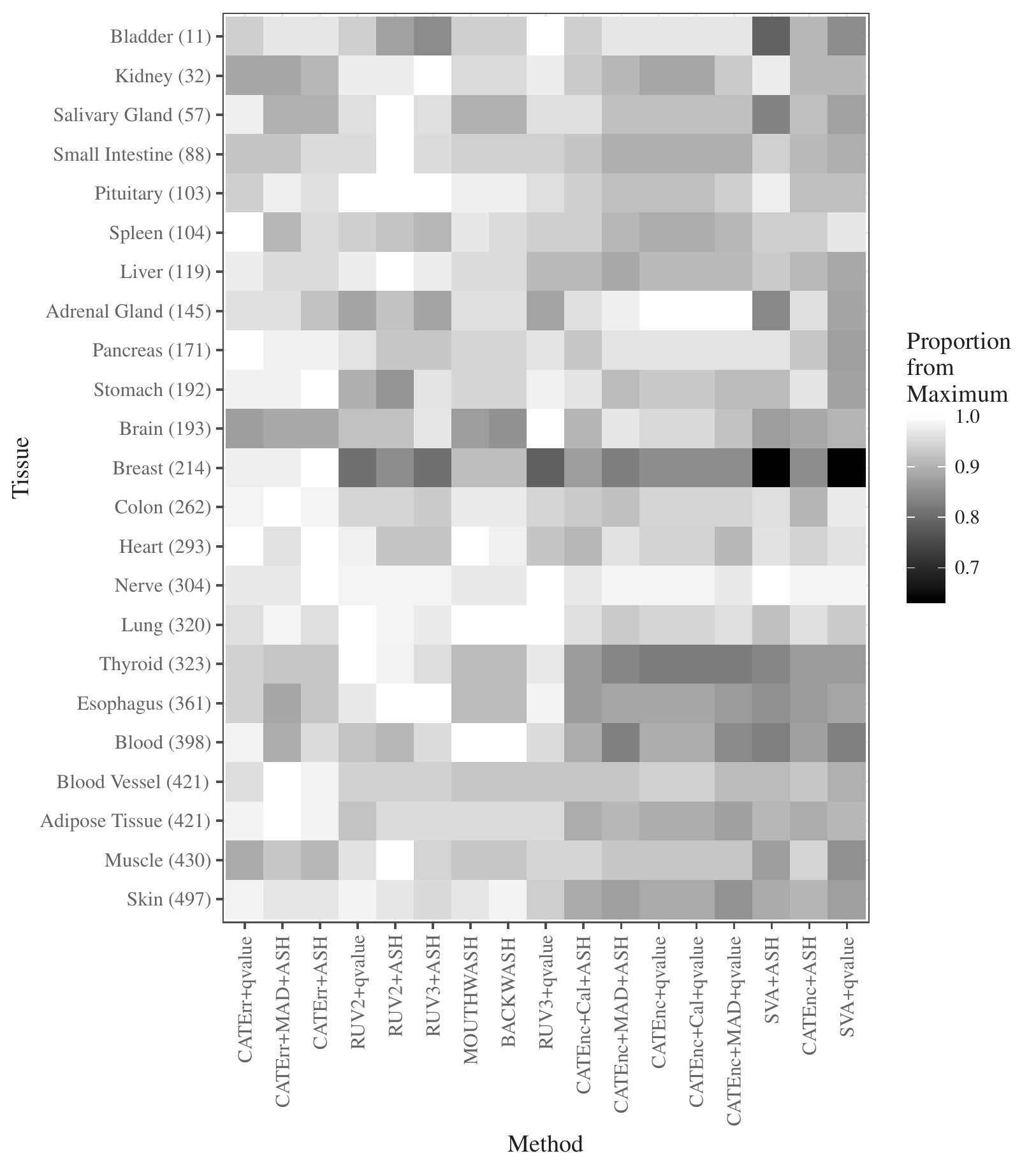}
\caption{This is a repeat of Figure \ref{figure:prop.max} except the
  control gene methods use the list from
  \protect\citet{lin2017housekeeping}. See Figure
  \ref{figure:prop.max} for a description.}
\label{figure:prop.max.lin}
\end{center}
\end{figure}

\stepcounter{stable}
\begin{table}[ht]
  \centering
  \caption{Median estimate of $\pi_0$ for each method across tissues
    when testing for differences between sexes. This is the same table
    as Table \ref{table:pi0hat} except the control gene methods used
    the list from \protect\citet{lin2017housekeeping}.}
  \label{table:pi0hat.lin}
\begin{tabular}{lr}
  \hline
Method & $\hat{\pi}_0$ \\
  \hline
SVA+ASH & 0.28 \\
  CATErr+ASH & 0.33 \\
  RUV3+ASH & 0.39 \\
  OLS+ASH & 0.40 \\
  RUV2+ASH & 0.40 \\
  CATEnc+ASH & 0.49 \\
  SVA+qvalue & 0.70 \\
  CATEnc+Cal+ASH & 0.71 \\
  CATErr+qvalue & 0.76 \\
  CATEnc+qvalue & 0.77 \\
  RUV2+qvalue & 0.78 \\
  RUV3+qvalue & 0.78 \\
  OLS+qvalue & 0.80 \\
  CATEnc+Cal+qvalue & 0.87 \\
  CATErr+MAD+ASH & 0.91 \\
  MOUTHWASH & 0.99 \\
  CATEnc+MAD+ASH & 0.99 \\
  BACKWASH & 0.99 \\
  CATEnc+MAD+qvalue & 1.00 \\
  CATErr+MAD+qvalue & 1.00 \\
   \hline
\end{tabular}
\end{table}

\stepcounter{sfigure}
\begin{figure}[!htb]
\begin{center}
\includegraphics[scale = 0.9]{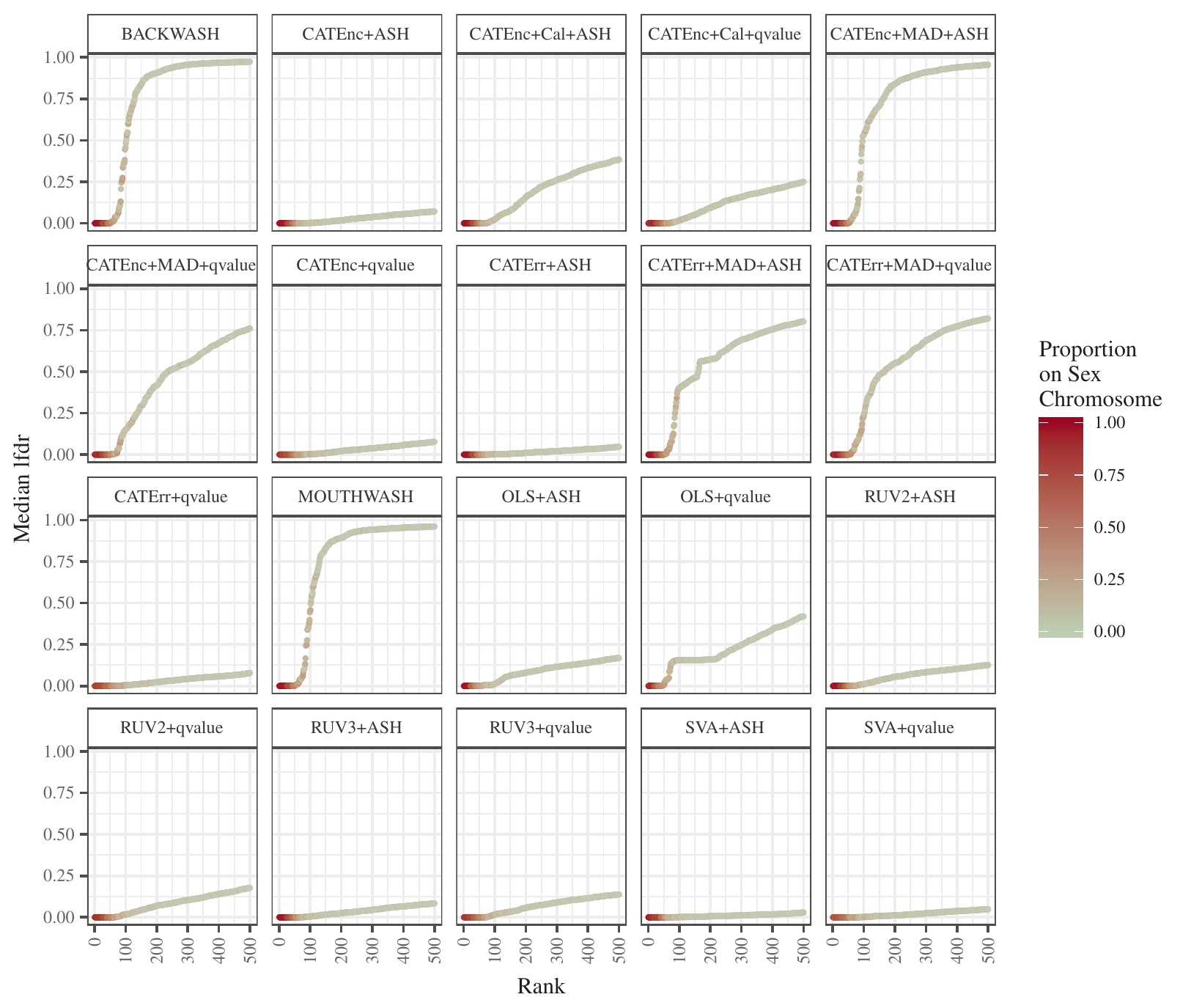}
\caption{This is a repeat of Figure \ref{figure:rank.lfdr} except the
  control gene methods use the list from
  \protect\citet{lin2017housekeeping}. See Figure
  \ref{figure:rank.lfdr} for a description.}
\label{figure:rank.lfdr.lin}
\end{center}
\end{figure}

\stepcounter{sfigure}
\begin{figure}[!htb]
\begin{center}
\includegraphics[scale = 0.9]{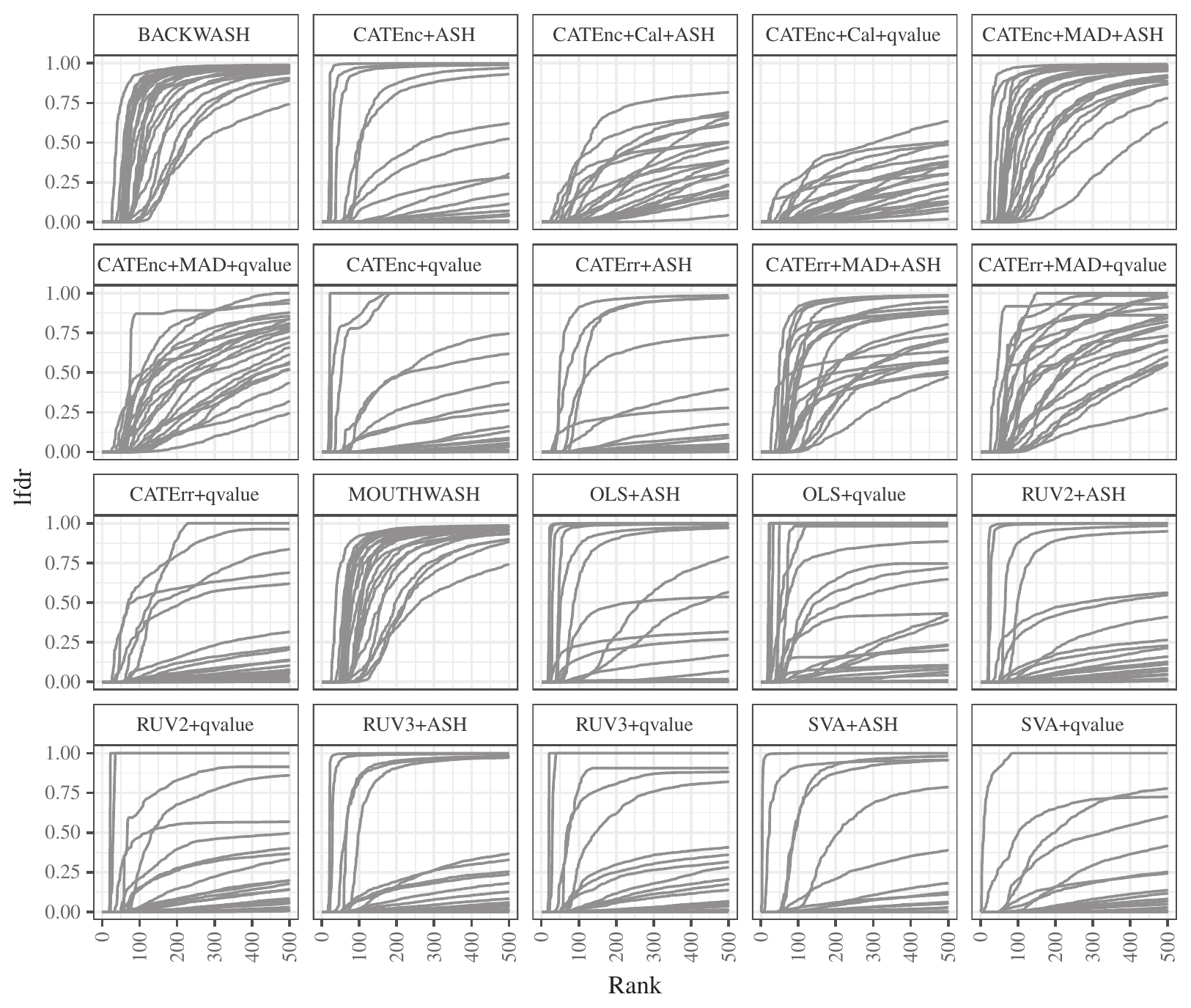}
\end{center}
\caption{This is a repeat of Figure \ref{figure:lfdr.full} except the
  control gene methods use the list from
  \protect\citet{lin2017housekeeping}. See Figure
  \ref{figure:lfdr.full} for a description.}
\label{figure:lfdr.full.lin}
\end{figure}

\clearpage

\bibliography{vicar_bib}

\end{document}